\newtheorem{theorem}{Theorem}
\theoremstyle{definition}
\newtheorem{definition}{Definition}
\theoremstyle{remark}
\newtheorem{note}{Note}
\newdefinition{remark}{Remark}
\newtheorem{corollary}{Corollary}
\journal{ArXiv}
\begin{document}

\begin{frontmatter}



\title{Polynomial-time Approximation Algorithm for finding Highly Comfortable Team in any given  Social Network}


\author[NITT]{Lakshmi Prabha S}
\ead{jaislp111@gmail.com}

\author[NITT]{T.N.Janakiraman}
\ead{janaki@nitt.edu}


\address[NITT]{Department of Mathematics, National Institute of Technology, 
Trichy-620015, Tamil Nadu, India.}

\begin{abstract}
There are many indexes (measures or metrics) in Social Network Analysis (SNA), like density, cohesion, etc. In this paper, we define a new SNA index called \lq \lq comfortability \rq\rq. One among the lack of many factors, which affect the effectiveness of a group, is \lq \lq comfortability \rq\rq. So, comfortability is one of the important attributes  (characteristics) for a successful team work. It is important  to find a comfortable and successful team in any given social network.   In this paper, comfortable team, better comfortable team and highly comfortable team  of a  social network are defined based on \textbf{graph theoretic concepts} and some of their structural properties are analyzed. 

It is proved that forming better comfortable team or  highly comfortable team in any connected network are NP-Complete using the concepts of domination in graph theory. Next, we give a polynomial-time approximation algorithm for finding such a highly comfortable team in any given network with  performance ratio $O(\ln \Delta)$, where $\Delta$ is the maximum degree of a given network (graph). The time complexity of the algorithm is proved to be $O(n^{3})$, where $n$ is the number of persons (vertices) in the network (graph). It is also proved that our algorithm has reasonably reduced the dispersion rate.

\end{abstract}

\begin{keyword}
Social networks \sep comfortability \sep less dispersive set \sep highly reduced dispersive set \sep highly comfortable team \sep graph algorithms \sep performance ratio \sep domination \sep dispersion rate.
\MSC[2010] 91D30 \sep  05C82 \sep 05C85  \sep 05C69 \sep 05C90.
\end{keyword}

\end{frontmatter}


\section{Introduction}
\label{intro}
There are many factors, lack of  which affect the group or team effectiveness. Team processes describe subtle aspects of interaction and patterns of organizing, that transform input into output. 
The team processes will be described in terms of seven characteristics: coordination, communication,
cohesion, decision making, conflict management, social relationships and performance feedback. The readers are directed to refer ~\cite{team} for further details of characteristics of team. In this paper, we discuss about an attribute or characteristic   called \lq\lq COMFORTABILITY \rq\rq, which is also essential for a successful team work. So, we define it as a new index in SNA. Readers are directed to refer~\cite{group} for more details on group dynamics.

Since the beginning of Social Network Analysis, Graph Theory has been a very important tool both to represent social structure and to calculate some indexes, which are useful to understand several aspects of the social context under analysis. Some of the existing indexes (measures or metrics) are betweenness, bridge, centrality, flow betweenness centrality, centralization, closeness, clustering coefficient, cohesion, degree,   density,  eigenvector centrality,  path length. Readers are directed to refer Martino et. al.~\cite{diameter} for more details on indexes in SNA.  In this paper, we define a new index in SNA called \lq comfortability \rq.

Although the definition of comfortability is new, the motivation  for the concept  comes from Wang et al.~\cite{positive}. Wang et. al. have defined \lq \textbf{positive influence}\rq \   based on the graph theoretic concept  \lq \textbf{degree}\rq.  In this paper, we define \lq \textbf{comfortability}\rq \  based on the graph theoretic concept \lq \textbf{eccentricity}\rq, which is based on the metric concept called \lq distance\rq.  

Let the social network be represented in terms of a graph, with the vertex of the graph denotes a person (an actor) in the social network and an edge between two vertices in a graph represents relationship between two persons in the social  network. All the networks  are connected networks in this paper, unless otherwise specified. If the given network is disconnected, then each connected component of the network can be considered and hence it is enough to consider only connected networks.  Hereafter, the word \lq team\rq \ represents induced sub network (sub graph) of a given network (graph). 

Following are some introduction for \textbf{basic graph theoretic  concepts}.
Some basic definitions from Slater et al.~\cite{Slater} are given below.

The  graphs  considered  in  this  paper  are  finite, simple, connected  and  undirected, unless otherwise specified. For a graph $G$, let $V(G)$ (or simply $V$) and $E(G)$ denote its vertex (node) set and edge set respectively and $n$ and $m$ denote the cardinality of those sets respectively. The \textit{degree} of a vertex $v$ in a graph $G$ is denoted by $deg_{G}(v)$. The \textit{maximum degree} of the graph $G$ is denoted by $\Delta(G)$. The length of any shortest path between any two vertices $u$ and $v$ of a connected graph $G$ is called the \textit{distance} between $u$ and $v$ and is denoted by $d_{G}(u,v)$. For a connected graph  $G$, the \textit{eccentricity} $e_{G}(v) = \max\{d_{G}(u,v): u\in V(G)\}$. If there is no confusion, we simply use the notions  $deg(v)$, $d(u,v)$ and $e(v)$ to denote degree, distance and eccentricity respectively for the concerned graph. The minimum and maximum eccentricities are the \textit{radius} and  \textit{diameter} of $G$, denoted by  $r(G)$ and $diam(G)$ respectively. A vertex with  eccentricity $r(G)$ is called a \textit{central vertex} and a vertex with eccentricity $diam(G)$ is called a \textit{peripheral vertex}. A graph $G$ is said to be 
\begin{itemize}
	\item \textit{self-centered}, if $r(G) = diam(G)$;
\item \textit{bi-eccentric}, if $r(G) = diam(G) -1$;
\item \textit{tri-eccentric}, if $r(G) = diam(G) -2$;
\item in general, $(a+1)$- eccentric, if $r(G) = diam(G) - a$.
\end{itemize}
   
   For  $v \in V(G)$, \textit{neighbors} of $v$ are the vertices adjacent to $v$ in $G$. The neighborhood $N_{G}(v)$ of $v$ is the set of all neighbors of  $v$ in $G$. It is also denoted by $N_{1}(v)$. $N_{j}(v)$ is the set of all vertices at distance $j$ from $v$ in $G$. A vertex $u$ is said to be an \textit{eccentric vertex} of $v$, when $d(u, v) = e(v)$. If $A$ and $B$ are not necessarily disjoint sets of vertices, we define
the distance from $A$ to $B$ as
$dist(A,B) = \ min\{ d(a, b) : a \in A, b \in B \}$. \textit{Cardinality} of a set $D$ represents the number of vertices in the set $D$. Cardinality of $D$ is denoted by $|D|$.

 A vertex of degree one is called a \textit{pendant vertex}. A \textit{walk} of length $j$ is an alternating sequence $W:  u_0, e_1, u_1, e_2, u_2,\ldots, u_{j-1}, e_j, u_j$ of vertices and edges with $e_i = u_{i-1}u_i$. If all $j$ edges are distinct, then $W$ is called a \textit{trail}. A walk with $j+1$ distinct vertices $u_0, u_1, \ldots, u_j$ is a \textit{path} and if $u_0 = u_j$ but $u_1, u_2, \ldots, u_j$ are distinct, then the trail is a \textit{cycle}. A path of length $n$ is denoted by $P_n$ and a cycle of length $n$ is denoted by $C_n$. A graph $G$ is said to be \textit{connected} if there is a path joining each pair of nodes. A component of a graph is a maximal connected sub graph. If a graph has only one component, then it is connected, otherwise it is \textit{disconnected}. A \textit{tree} is a connected graph with no cycles (acyclic).

  We say that   $H$ is a \textit{sub graph} of a graph $G$, denoted  by $H < G$, if  $V(H) \subseteq V(G)$  and $uv \in E(H)$ implies $uv \in E(G)$ . If a sub graph $H$ satisfies  the added property that for every pair $u,v$ of vertices, $uv \in E(H)$ if and only if $uv \in E(G)$, then $H$ is called an \textit{induced sub graph} of $G$. The induced sub graph $H$ of $G$ with $S = V(H)$ is called the sub graph induced by $S$ and is denoted by $\left\langle S|G\right\rangle$ or simply $\left\langle S\right\rangle$.
  
  Let $k$ be a positive integer. The $k^{th}$  \textit{power} $G^{k}$ of a graph $G$  has $V(G^{k}) = V(G)$ with $u,v$ adjacent in $G^{k}$ whenever $d(u,v) \leq k$. A graph is said to be \textit{complete} if each vertex in the graph  is adjacent to every other vertex in the graph. A \textit{clique} is a maximal complete sub graph.

  The concept of domination was introduced by Ore~\cite{Ore} . Readers are  directed to refer Slater et al.~\cite{Slater}.   A set $D \subseteq V(G)$  is  called a \textit{dominating set} if every vertex  $v$ in $V$ is either an element of $D$ or is adjacent to an element of  $D$.  A  dominating set  $D$ is a minimal dominating set if  $D-\{v\}$  is  not a dominating set for any $v \in D$. The domination number  $\gamma(G)$ of a graph $G$ equals the minimum cardinality of a dominating set in $G$. 
  
  A set $D$ of vertices in a connected graph $G$ is called a \textit{$k$-dominating set} if every vertex in $V- D$ is within distance $k$ from some vertex of $D$. The concept of the $k$-dominating set was
introduced by Chang and Nemhauser~\cite{Chang,Chang2} and could find applications for many situations and structures which give rise to graphs; see the books by Slater et al~\cite{Slater, REF6}. So, dominating set is nothing but 1-distance dominating set.
  
  Sampath  Kumar  and  Walikar~\cite{CDS}  defined  a connected  dominating  set   $D$  to  be  a dominating set $D$,  whose  induced   sub-graph  $\left\langle D\right\rangle$ is \textit{connected}.  The  minimum  cardinality  of  a connected  dominating  set  is  the  connected domination number $\gamma_c(G)$. 
  
The readers are also directed to refer Slater et al.~\cite{Slater} for  further details of basic definitions, not given in this paper.

The notation floor($x$) = $ \left\lfloor x \right\rfloor$ is the largest integer not greater than $x$ and ceiling($x$) = $ \left\lceil x\right\rceil $ is the smallest integer not less than $x$. For example, $ \left\lfloor 3.5 \right\rfloor = 3$ and  $ \left\lceil 3.5 \right\rceil = 4$.

Let us recall the terminologies as follows: The symbol ($\rightarrow$)  denotes  \lq\lq represents \rq\rq 
\begin{itemize}
	\item Graph $\rightarrow$ Social Network (connected)
	\item Vertex of a graph $\rightarrow$ Person in a social network
	\item Edge between two vertices of a graph $\rightarrow$ Relationship between two persons in a social network
	\item Induced subgraph of a graph $\rightarrow$  \textbf{Team or Group} of a social network.
\end{itemize}

Given a connected network of people.  Our problem is to find a team (sub graph) which is less dispersive, highly flexible and performing better. 
Let us first discuss the characteristics of a  good performing (successful) team. 
\begin{definition}We define a team  to be  \textbf{good performing or successful} if the team is 
\begin{enumerate}
	\item less dispersive
	\item having good communication among the team members
	\item easily accessible to the non- team members 
	\item a good service provider to the non-team members (for the whole network).
\end{enumerate}
\end{definition}
Next, let us mathematically formulate these four characteristics.\\
For any given network, the comfortable team should be dominating. A team $D$ is said to be dominating if   at least one person in the team is accessible to every person not in the team. Domination is an important criteria for any network. So, the persons in the team should first of all be dominating the entire network. Domination represents the fourth characteristic, that is if a team is dominating, it means that the team  is a good service provider to the non-team members. \\
\textbf{Domination $\rightarrow$ good service provider to the non-team members}. 

There should always be some communication between the persons in the network. The dominating team should be connected so that they discuss among themselves and as a team act for the welfare of the whole network. Connectedness represents the second characteristic.\\
\textbf{Connectedness $\rightarrow$ good communication among team members}.

So, any team should always be dominating and connected. The other two characteristics will be mathematically formulated in the later sections. When is a team or set called less dispersive? Let us discuss in the coming section.
\begin{note}
\textbf{Notation 1:}\\
In all the figures of this paper, 
\begin{itemize}
	\item $\{v_1, v_2,\ldots, v_n\}$ represent the vertex set of the graph $G$, that is,\\ $V(G) = \{v_1, v_2,\ldots, v_n\}$. 
	\item The numbers besides every vertex represents the eccentricity of that vertex. 
For example, in Figure~\ref{fig1}, in the graph $G$, $e(v_1) = 5$, $e(v_2) = 4$, $e(v_3) = 3$, $e(v_4) = 3$, $e(v_5) = 4$ and $e(v_6) = 5$. 
\item The set notation  $D = \{v_1, v_2, \ldots, v_n\}$ represents only the individual persons but does not represent the relationship between them.
\item The notation $\left\langle D \right\rangle$ represents the team. $\left\langle D \right\rangle$ is the induced sub graph of $G$, which represents the persons as well as the relationship between them.
So, the set $D$ represents only the team members and the team represents the persons with their relationship.
\end{itemize}
\end{note}
The remaining part of the paper is organized as follows:
\begin{itemize}
	\item Section~\ref{comf} defines comfortable team and analyses the advantages and disadvantages of the comfortable team.
\item Section~\ref{BC} discusses about better comfortable team,  its properties, advantages and disadvantages.
\item Section~\ref{HC} defines highly comfortable team and discusses some of its properties.
\item In section~\ref{Algo}, an approximation algorithm for finding highly comfortable team in any given network is given with illustrations. Time complexity of the algorithm is analyzed. 
\item In Section~\ref{correct}, some theorems are proved which supports the correctness of the algorithm.
\item Section~\ref{rate} discusses about how algorithm helps in reduction of dispersion rate.
\item Section~\ref{ratio} proves some theorems which gives the performance ratio of the algorithm.
\item Section~\ref{max} defines highly comfortable team with maximum members and discusses about the advantages of minimum highly comfortable team and maximum highly comfortable team.
\item Section~\ref{conc} concludes the paper and discusses about some future work.
\end{itemize}
\section{Comfortable Team}
\label{comf}
An important "descriptive" index which is easy to calculate, but gives us important information about the closeness of the vertices in the graph, is the diameter.  Martino et. al. ~\cite{diameter} has given that \lq\lq The longer the diameter is, the more a graph (network) is dispersive\rq\rq. Let us coin it in graph theoretical terms as follows:
If $d(u,v) = diam(G)$, then the vertex (person) $u$ is said to be dispersive from the vertex (person) $v$. That is, the person $u$ is far away from the person $v$ and $u$ feels uncomfortable to pass any information to the person $v$. Similarly, any person in the network is uncomfortable with all the persons in his $d^{th}$ neighborhood, where $d=diam(G)$. Also, any person ($u$) in the network is uncomfortable with all the persons in his farthest set (set of all eccentric vertices of the vertex $u$). Thus, we define the less dispersive set  as follows:
\begin{definition}\textbf{Less Dispersive Set:}
A set $D$ is said to be less dispersive, if $e_{\left\langle D\right\rangle}(v) <  e_G(v)$, for every vertex $v \in D$.
\end{definition}
\begin{definition}\textbf{Less Dispersive Dominating Set:}
A set $D$ is said to be a less dispersive dominating set if the set $D$ is dominating, connected and less dispersive. The cardinality of minimum  less dispersive dominating set of $G$ is denoted by $\gamma_{comf}(G)$. A set of vertices is said to be a  $\left\langle \gamma_{comf}-set \right\rangle$, if it is a less dispersive dominating set with cardinality  $\gamma_{comf}(G)$. 
\end{definition}
\begin{definition}\textbf{Comfortable Team:}
A team $\left\langle D \right\rangle$ is said to be a comfortable team if $\left\langle D \right\rangle$  is  less dispersive and dominating. Minimum comfortable team is a comfortable team with the condition:  $|D|$ is minimum.
\end{definition}
\textbf{Example 1:} Consider the graph (network) $G$ in Figure~\ref{fig1}.  Here, $G$ is a path of length six ($P_6$). $D= \{v_2,v_3,v_4,v_5\}$. The induced sub graph  $\left\langle D \right\rangle$ of $G$ forms a path of length four ($P_4$) and so it dominates all the vertices in $V-D$. Also, $\left\langle D \right\rangle$  forms the comfortable team of $G$,  because \\$e_{\left\langle D\right\rangle}(v_2) = 3 < 4 =   e_G(v_2). \  \Rightarrow e_{\left\langle D\right\rangle}(v_2) <  e_G(v_2)$. Similarly, $e_{\left\langle D\right\rangle}(v_i) <  e_G(v_i)$ for every $i = 3,4,5$. Thus, $D$ forms less dispersive set and  hence $\left\langle D \right\rangle$  forms the comfortable team of $G$. $\Rightarrow \gamma_{comf}(P_6) = 4$.
\begin{figure}
\footnotesize\centering
\centerline{\includegraphics[width=3in]{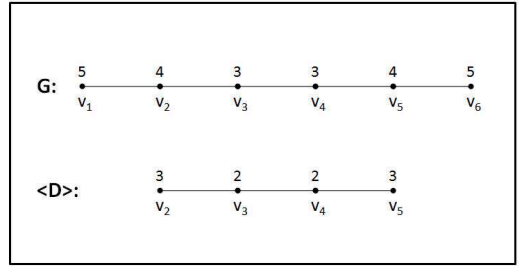}}
\caption{A Network and its Comfortable Team}
\label{fig1}
\end{figure}

So, the problem is coined as: Find a team which is dominating, connected and less dispersive. 
It is  to be noted that there are many graphs  which do not have $\left\langle \gamma_{comf}-set \right\rangle$. So, we must try to avoid such kind of networks for successful team work.\\
\textbf{Example 2:} Consider the graph $G$ in Figure~\ref{fig2}. Here, $G$ is a cycle of length six ($C_6$).
The vertices $v_1$ and $v_4$ dominate all the vertices of $G$. So, with connectedness, we can take $D=\{v_1, v_2, v_3, v_4\}$. The set $D$ dominates $G$, but $D$ is not less dispersive, because,\\
$e_{\left\langle D\right\rangle}(v_1) = 3 =  e_G(v_1)$ and  $e_{\left\langle D\right\rangle}(v_4) = 3 =  e_G(v_4)$. The vertices $v_1$ and $v_4$  maintained the original  eccentricity as in $G$. Thus, $e_{\left\langle D\right\rangle}(v) <  e_G(v)$, for every vertex $V \in D$ is \textbf{not satisfied}. So, $D$ is not less dispersive and hence $\left\langle D\right\rangle$ is not a comfortable team. \\
Also, $D_1 =\{v_1, v_2, v_3\}$ forms less dispersive set in $G$, (from Figure~\ref{fig2}), but $D_1$ is not dominating. The vertex $v_5$ is left undominated. \\
From the above discussion, we get, 
\begin{itemize}
	\item  the less dispersive set may not be dominating 
\item  the dominating set may not be less dispersive.
\end{itemize} 
So, under one of these two cases, the graph $G$ does not possess comfortable team. It is to be noted that not only $C_6$, but all the cycles $C_n$, do not possess a comfortable team. Also, there are infinite families of graphs which do not possess comfortable team.
\begin{figure}
\footnotesize\centering
\centerline{\includegraphics[width=3in]{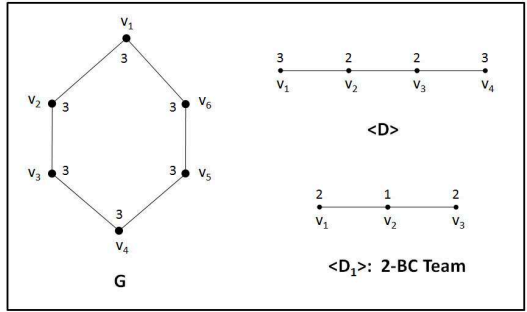}}
\caption{A Network and its 2-BC Team}
\label{fig2}
\end{figure}
\\
\textbf{Disadvantage of the comfortable team:}\\ 
We can see that if a set $D$ is $\left\langle \gamma_{comf}-set \right\rangle$, then the diameter $diam(\left\langle D\right\rangle)$ is reduced by one or two and the radius $r(\left\langle D\right\rangle)$ is reduced by one (by the property of domination). So, under the view of Martino et al.~\cite{diameter},  the team $\left\langle D \right\rangle$ is still dispersive and hence uncomfortable. Also, as discussed in the Example 2, comfortable team does not exist in any given network. Infinite families of networks do not possess comfortable team. As this comfortable team has only minimum applications,  we do not concentrate much on the comfortable team in this paper. 
 
The main aim of this paper is to find a team which is \textbf{more comfortable and less dispersive}, in \textbf{any given network}.\\
So, we define a better comfortable team with one more condition in the next section.
\section{Better Comfortable Team}
\label{BC}
\begin{definition}\textbf{$l$-Better Reduced Dispersive Set:}
A set $D$ is said to be $l$-better reduced dispersive set if 
\begin{enumerate}
	\item $e_{\left\langle D\right\rangle}(v) <  e_G(v)$, for every vertex $v \in D$ (less dispersive) and 
	\item $diam(\left\langle D\right\rangle) \leq \left\lceil \displaystyle \frac{diam(G)}{l}\right\rceil$ (measure of dispersion within $D$, which represents good communication among team members).
\end{enumerate}
\end{definition}
Next, we define the better reduced dispersive set  with respect to domination.
\begin{definition}\textbf{$l$-Better Reduced Dispersive $k^{*}-$ Distance Dominating Set:}
A set $D$ is said to be a \lq $l$-better reduced dispersive\rq \ $k^{*}-$ distance dominating set if $D$ is a $l$-better reduced dispersive set and a $k^{*}-$ distance dominating set. That is, 
\begin{enumerate}
\item $e_{\left\langle D\right\rangle}(v) <  e_G(v)$, for every vertex $v \in D$ (less dispersive)  
	\item $diam(\left\langle D\right\rangle) \leq \left\lceil \displaystyle \frac{diam(G)}{l}\right\rceil$ (measure of dispersion within $D$, which represents good communication among team members) and 
	\item $dist(D, V-D) \leq k^{*}$ ($k^{*}-$ distance domination). ($k^{*}$ is the measure of dispersion between $D$ and $V-D$. It is also called dispersion index).
 
\end{enumerate}
Minimum cardinality of a \lq $l$-better reduced dispersive\rq \ $k^{*}-$ distance dominating set of $G$ is denoted by $\gamma_{lbcomf}(G)$.
\end{definition}

\begin{definition}\textbf{Better Comfortable Team:}
A team $\left\langle D \right\rangle$ is said to be a $l$-Better Comfortable ($l$-BC) team if  $\left\langle D \right\rangle$  is $l$-better reduced dispersive, $k^{*}-$ distance dominating.  $l$-min BC team is a $l$-BC team with the condition: $|D|$ and $k^{*}$ are minimum.
\end{definition}
\textbf{Example 3:} Consider the graph $G$ ($C_6$) in Figure~\ref{fig2}. In $C_6$, $D_1= \{v_1, v_2, v_3\}$ forms a $2$-better reduced dispersive, 2-distance dominating set, because
\begin{enumerate}
	\item  $e_{\left\langle D_1\right\rangle}(v_i) <  e_G(v_i)$, for $i = 1,2,3$.
	\item $diam(\left\langle D_1\right\rangle) = 2$ and  $\left\lceil \displaystyle \frac{diam(G)}{l}\right\rceil = \left\lceil \displaystyle \frac{3}{2} \right\rceil = 2$ and hence \\ $diam(\left\langle D_1\right\rangle) = \left\lceil \displaystyle \frac{diam(G)}{l}\right\rceil$.
	\item $k^{*} = 2$, because $v_5$ is reachable from $D_1$ by  distance  two, $v_4$ and $v_6$ are reachable from $D_1$ by distance one. $\Rightarrow dist(D_1, V-D_1) \leq 2$. 
\end{enumerate}
Thus, for $l = 2$, $l$-BC team exists in $C_6$ and hence $\gamma_{2bcomf}(G) = 3$.
\begin{note}
It is to be noted that in this section, we have defined $k^{*}-$ distance dominating set, not simply dominating set, because, by definition, a $l$-better reduced dispersive set will always have $diam(\left\langle D\right\rangle) \leq \left\lceil \displaystyle \frac{diam(G)}{l}\right\rceil$ and hence  there may not always exist a set $D$ at distance one from the set $V-D$. For example, if $l=2$, then, as discussed in the Example 3, $k^{*} = 2$ for $C_6$ and hence dominating set may not be possible, but $k^{*}$-distance dominating set is possible. 
\end{note}
\begin{theorem}
\label{better NPC}
Forming $l$-better comfortable team in a given network is NP-complete.
\end{theorem}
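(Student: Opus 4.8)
The plan is to phrase the task as a decision problem: given a connected graph $G$ together with positive integers $l$, $k^{*}$ and $j$, decide whether $G$ possesses an $l$-better reduced dispersive $k^{*}$-distance dominating set $D$ with $|D|\le j$ (equivalently, an $l$-BC team on at most $j$ vertices). I would then establish membership in NP and NP-hardness separately.

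Membership in NP is direct: take $D$ itself as the certificate. From $D$ one runs a breadth-first search out of each vertex of $\langle D\rangle$ to obtain all internal distances, and hence every $e_{\langle D\rangle}(v)$ and $diam(\langle D\rangle)$ (rejecting at once if $\langle D\rangle$ turns out disconnected); a breadth-first search out of each vertex of $G$ yields every $e_G(v)$ and $diam(G)$; a final sweep checks $dist(D,V-D)\le k^{*}$. Comparing these quantities verifies the three defining conditions in $O(n^{3})$ time, so the verifier is polynomial.

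For NP-hardness I would reduce from a known NP-complete domination-type problem. The \textsc{Connected Dominating Set} problem is the natural source, since connectedness of $\langle D\rangle$ is already forced by the finiteness of $diam(\langle D\rangle)$; alternatively one may start from the \textsc{$k$-Distance Dominating Set} problem, which matches the $k^{*}$-distance requirement closely. Given an instance $(H,s)$, I would build $G$ by attaching a small fixed gadget $W$ to $H$ whose role is threefold: (i) to inflate and pin down $diam(G)$ and the eccentricities $e_G(v)$ so that the less-dispersive inequality $e_{\langle D\rangle}(v)<e_G(v)$ holds automatically for every connected set of the form $D=S\cup W_0$, where $S$ is a connected subset of $H$ and $W_0$ is the designated part of the gadget; (ii) to fix $l$ — for instance $l=1$, under which the condition $diam(\langle D\rangle)\le\lceil diam(G)/l\rceil$ reduces to ``$\langle D\rangle$ is connected'' — so that the diameter bound becomes non-binding; and (iii) to fix $k^{*}$ so that $k^{*}$-distance domination of $G$ collapses onto ordinary (or connected) domination of $H$. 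One then proves the equivalence: $H$ has a connected dominating set of size $\le s$ if and only if $G$ has an $l$-BC team of size $\le j:=s+|W_0|$. The forward implication adjoins the forced gadget vertices to a solution of $H$; the reverse argues that any $l$-BC team must contain those vertices and that deleting them leaves a connected dominating set of $H$ of the required size.

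The step I expect to be the real obstacle is engineering the gadget so that the three conditions act in concert rather than against one another: the less-dispersive condition must not become trivially unsatisfiable — it fails the instant $D$ contains a vertex whose eccentricity in $G$ is already as small as possible, such as a central vertex — yet $k^{*}$-distance domination of $G$ must still be equivalent to domination of $H$, and the threshold $j$ must separate yes-instances from no-instances exactly. Degenerate parameter ranges (a $G$ of very small diameter, an $l$ so large that the ceiling forces $\langle D\rangle$ to be a clique, or a disconnected $\langle D\rangle$) also have to be either excluded by the construction or dispatched by hand.
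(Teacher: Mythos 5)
Your skeleton (decision version, certificate-based membership in NP, hardness via reduction from \textsc{Connected Dominating Set}) is the standard and, in principle, the right way to argue this, and your NP-membership argument is complete and correct. But the hardness half is only a plan, not a proof: everything hinges on the gadget $W$, and you never construct it. You yourself identify the genuine obstruction --- the condition $e_{\left\langle D\right\rangle}(v) < e_G(v)$ must hold for \emph{every} $v \in D$, and since $\left\langle D\right\rangle$ is an induced subgraph we always have $e_{\left\langle D\right\rangle}(v) \geq \max\{d_G(u,v) : u \in D\}$, so the condition forces all eccentric vertices of every $v\in D$ to lie outside $D$ while simultaneously $D$ must distance-dominate them. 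Making that compatible with an exact equivalence ``$H$ has a connected dominating set of size $\leq s$ iff $G$ has an $l$-BC team of size $\leq s + |W_0|$'' is precisely the part that needs to be exhibited, and until the gadget is written down and both directions of the equivalence are verified, the reduction does not exist. The choice $l=1$ is also problematic: the theorem is about $l$-BC teams for a given $l$, and the paper's later analysis only uses $l \geq \frac{3}{2}$; moreover with $l=1$ the diameter cap is not entirely vacuous (induced subgraphs can have diameter exceeding $diam(G)$), so even that simplification needs care.

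For comparison, the paper's own proof takes a much terser route: it observes that any $l$-better reduced dispersive $k^{*}$-distance dominating set is a connected dominating set of the power graph $G^{k^{*}}$, and invokes the NP-completeness of connected domination. That argument avoids gadget construction entirely, but it establishes the implication in the opposite direction from what a hardness proof requires (it shows solutions of the new problem are solutions of a hard problem, rather than reducing the hard problem to the new one), and it does not address how the extra eccentricity-reduction and diameter-cap constraints affect the correspondence. So your approach, if completed, would actually be the more rigorous one --- but as it stands the central construction is missing.
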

\begin{proof}
Let $D$ be a minimum $l$-better reduced dispersive $k^{*}-$ distance dominating set of $G$.\\
$\Rightarrow D$  is a connected $k^{*}-$ distance dominating set of $G$ (since any better reduced dispersive set is a connected set).\\
$\Rightarrow D$  is a connected dominating set of $G^{k^{*}}$ (by definition of the graph $G^{k^{*}}$). \\
Finding $\gamma_c(G)$, for any graph $G$ is NP-complete (by Slater et al.~\cite{Slater}).\\
$\Rightarrow$ Finding $\gamma_c(G^{k^{*}})$ is NP-complete.\\
$\Rightarrow$ Finding minimum \lq $l$-better reduced dispersive\rq \  $k^{*}-$ distance dominating set of $G$ is NP-complete (by above points). \\
Thus forming $l$-better comfortable team in a given network is NP-complete.
\end{proof}
\textbf{Disadvantage of Better Comfortable Team}\\
The less dispersive set $D$ is made better reduced dispersive by fixing the diameter of $D$ and hence dispersiveness is reduced comparatively.  $dist(D, V-D) \leq k^{*}$. Now, $l$-BC team satisfies all the characteristics of a good performing team (discussed in Section~\ref{intro}) except the third one, because,  if $k^{*} > diam(\left\langle D \right\rangle)$, then the maximum distance among the vertices in $D$ is lesser than the distance between the vertices in $D$ and $V-D$. It means that the team members are less dispersive but persons not in the team (vertices in the set $V-D$)  have difficulty in accessing the  team members (vertices in $D$). It is not fair to make $D$ comfortable and $V-D$ having uncomfortability to reach $D$. A team is formed in the network only to serve for the whole network. So, in the next section, we define a highly comfortable team, maintaining comfortability inside $D$ and accessibility between $D$ and $V-D$.\\ 
\textbf{Advantages of Better Comfortable Team}\\
The following are some advantages of better comfortable team.
\begin{itemize}
	\item The better comfortable team always exists in any given network.
	\item If $k^{*} \leq diam(\left\langle D \right\rangle)$, then the better comfortable team itself is highly comfortable.
\end{itemize}

\section{Highly Comfortable team}
\label{HC}
\begin{definition}
\label{all}
\textbf{$l$-Highly Reduced Dispersive $k^{*}-$ Distance Dominating Set:}
A set $D$ is said to be a \lq $l$-highly reduced dispersive\rq \ $k^{*}-$ distance dominating set if
\begin{enumerate}
 
\item $D$ is a $l$-better reduced dispersive  $k^{*}-$ distance dominating set and 
\item $k^{*} \leq diam (\left\langle D\right\rangle)$. (easily accessible from the non-team members)
\end{enumerate}
That is, 
\begin{enumerate}
\item $e_{\left\langle D\right\rangle}(v) <  e_G(v)$, for every vertex $v \in D$ (less dispersive)  
	\item $diam(\left\langle D\right\rangle) \leq \left\lceil \displaystyle \frac{diam(G)}{l}\right\rceil$ (measure of dispersion within $D$, that is, good communication among team members)  
	\item $dist(D, V-D) \leq k^{*}$ ($k^{*}-$ distance domination, that is, good service providers to the non team members)
	and \item $k^{*} \leq diam (\left\langle D\right\rangle)$ (easily accessible from the non team members).
 
\end{enumerate}
Minimum cardinality of a  \lq $l$-highly reduced dispersive\rq \ $k^{*}-$ distance dominating set of $G$ is denoted by $\gamma_{lhcomf}(G)$.
\end{definition}
\begin{definition}\textbf{$l$-Highly Comfortable Team:}
A team $\left\langle D \right\rangle$ is said to be $l$-Highly Comfortable ($l$-HC team) if  $\left\langle D \right\rangle$  is $l$-highly reduced dispersive,  $k^{*}-$ distance dominating. $l$-min HC team is a $l$-HC team with the condition:  $|D|$ and  $k^{*}$ are minimum.
\end{definition}
Thus, from the Definition~\ref{all}, it is clear that $l$-HC team satisfies all the characteristics of a good performing team, mentioned in the Definition 1 in  Section~\ref{intro} .
The third characteristic given in the Section~\ref{intro} is mathematically formulated as:\\
If $k^{*} \leq diam (\left\langle D\right\rangle)$, then it means that the team is easily accessible from the non-team members (since $k^{*}$ denotes the distance between $D$ and $V-D$).

\subsection{Properties}
First, we prove the NP-completeness of forming $l$-HC team in a given network.
\begin{theorem}
Forming $l$-HC team in a given network is NP-complete.
\end{theorem}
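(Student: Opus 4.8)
The statement has two parts, and I would discharge them separately: that deciding the existence of a small $l$-HC team lies in NP, and that it is NP-hard. For the NP part, take as certificate a vertex set $D$ together with the value of $k^{*}$. Each of the four conditions in Definition~\ref{all} is checkable in polynomial time: a breadth-first search rooted at every vertex yields all eccentricities $e_{G}(v)$, and the same search performed inside $\left\langle D\right\rangle$ yields all $e_{\left\langle D\right\rangle}(v)$, verifying condition~1 (which incidentally certifies that $\left\langle D\right\rangle$ is connected, since a disconnected induced subgraph has infinite eccentricities) and giving $diam(\left\langle D\right\rangle)$ for condition~2; a single breadth-first search from $D$ computes $dist(D,V-D)$ for condition~3; and condition~4 is then a direct comparison. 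All of this is $O(n^{3})$ in the worst case, so the problem is in NP.

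For NP-hardness I would give a polynomial reduction from the connected domination problem, whose NP-completeness is recorded in Slater et al.~\cite{Slater} and already drives the proof of Theorem~\ref{better NPC}. Starting from a connected graph $H$ on which we wish to decide $\gamma_{c}(H)\le \kappa$, I would build a graph $G$ by attaching a fixed auxiliary gadget to $H$ — say a small pendant structure at each vertex of the core together with a short attached path to control distances — tuned so that: (i) every core vertex strictly drops its eccentricity on passing to the subgraph induced by any connected set that covers the core, which makes the less-dispersive condition~1 the operative constraint and ties feasible teams to connected dominating sets of $H$; (ii) the diameter of any such candidate team stays within $\left\lceil diam(G)/l\right\rceil$, so condition~2 holds by construction (and for $l=1$ it is in fact already forced by condition~1, since $diam(\left\langle D\right\rangle)=\max_{v\in D}e_{\left\langle D\right\rangle}(v)<diam(G)$); and (iii) ordinary domination suffices, so one may take $k^{*}=1$, whence condition~3 is plain domination and condition~4, $k^{*}\le diam(\left\langle D\right\rangle)$, holds automatically as soon as $|D|\ge 2$. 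With such a $G$, an $l$-HC team of $G$ of size at most $\kappa+c$, where $c$ is the constant number of gadget vertices that are forced into $D$, exists if and only if $H$ has a connected dominating set of size at most $\kappa$, which completes the reduction.

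The delicate point is exactly (i)--(iii): the gadget must simultaneously pin down the induced eccentricities and the induced diameter while leaving the domination structure of $H$ untouched, so that the size correspondence is exact up to an additive constant and the diameter padding does not accidentally make the instance trivially feasible or infeasible. An alternative that mirrors the argument of Theorem~\ref{better NPC} verbatim is to observe that a minimum \lq $l$-highly reduced dispersive\rq{} $k^{*}$-distance dominating set of $G$ is in particular a connected $k^{*}$-distance dominating set of $G$, hence a connected dominating set of $G^{k^{*}}$, and that the extra requirement $k^{*}\le diam(\left\langle D\right\rangle)$ is automatically met in the instances produced (again because $k^{*}$ may be taken equal to $1$); invoking the NP-completeness of computing $\gamma_{c}$ then yields the claim. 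Either way, the crux — and the main obstacle I expect — is controlling eccentricities and diameters in the reduction gadget while preserving the hardness of the underlying domination instance.
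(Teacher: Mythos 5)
Your ``alternative'' paragraph is in fact the paper's entire proof: the paper disposes of this theorem in one line by observing that any $l$-highly reduced dispersive set is in particular an $l$-better reduced dispersive (hence connected) set, so the claim follows from Theorem~\ref{better NPC}, whose proof is exactly the chain you describe --- $D$ is a connected $k^{*}$-distance dominating set of $G$, hence a connected dominating set of $G^{k^{*}}$, and finding $\gamma_c$ is NP-complete by \cite{Slater}. On that route you have matched the paper, and your NP-membership argument (certificate $D$ plus $k^{*}$, verified by BFS in $O(n^{3})$) is a correct addition that the paper omits entirely.

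The genuine gap is in your primary route. The reduction from connected domination is never actually carried out: properties (i)--(iii) are a specification of what the gadget must achieve, not a construction, and as you yourself concede, satisfying them simultaneously is the whole difficulty --- in particular, forcing $e_{\left\langle D\right\rangle}(v) < e_G(v)$ for every $v \in D$ while keeping $diam(\left\langle D\right\rangle) \leq \left\lceil diam(G)/l\right\rceil$ and leaving the domination structure of $H$ untouched is exactly where such reductions break, and for $l$ appreciably larger than $1$ the diameter cap can make padded instances infeasible outright. As written this is an outline with its central lemma missing, not a proof. That said, your instinct that a genuine gadget is needed is sound: the containment argument you give as the fallback (and which the paper uses) only shows that the sought object is a special kind of connected dominating set of $G^{k^{*}}$ --- i.e., it maps the $l$-HC problem \emph{into} connected domination rather than reducing connected domination \emph{to} it --- so a fully rigorous NP-hardness proof would indeed require completing the construction you sketch.
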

\begin{proof}
As any $l$-highly reduced dispersive set is a $l$-better reduced dispersive set (a connected set), the proof follows from Theorem~\ref{better NPC}.
\end{proof}
\section{Approximation Algorithm}
\label{Algo}
In this section, we give a polynomial-time approximation algorithm for finding $l$-HC  team from a given network.
\subsection{Notation 2}
\begin{itemize}
	\item $D \rightarrow $ minimum $l$-Highly reduced dispersive, $k^{*}$-distance dominating set.
	\item $D_1 \rightarrow $ output of our algorithm, which is a minimal $l$-Highly reduced dispersive, $k^{*}$-distance dominating set. $\Rightarrow |D_1| \geq |D|$.
	\item $k^{*} \rightarrow$ the distance between two sets $D$ and $V-D$, that is, \\ $dist(D,V-D) \leq k^{*}$.
	\item $k \rightarrow$ the distance between two sets $D_1$ and $V-D_1$, that is, \\ $dist(D_1,V-D_1) \leq k$.
	\item $d_1 =$ upper bound of $diam(\left\langle D\right\rangle)$. So,  $d_1 =  \left\lceil \displaystyle \frac{diam(G)}{l}\right\rceil$.
	\item $diam(\left\langle D_1\right\rangle) \leq \left\lceil \displaystyle \frac{diam(G)}{l}\right\rceil$ at the intermediate stages.\\
	Finally, $ diam(\left\langle D_1\right\rangle) =   \left\lceil \displaystyle \frac{diam(G)}{l}\right\rceil = d_1$.  
	\item Performance ratio = $\displaystyle \frac{|minimal\ set|} {|minimum\ set|}$.
\end{itemize}
\subsection{Algorithm HICOM} 
A polynomial time approximation algorithm for finding $l$-HC team is given below.\\
Input: $G$.\\
Output: $D_1$, which is a $l$-Highly Reduced Dispersive $k^{*}$-distance dominating set, so that $\left\langle D_1\right\rangle$ is a $l$-HC team.\\
First choose any $l$ from the set of positive real numbers. Let $d_1 =$ upper bound of $diam(\left\langle D\right\rangle)$. So,  $d_1 =  \left\lceil \displaystyle \frac{diam(G)}{l}\right\rceil$.\\   
HICOM(G)
\begin{enumerate} 
\item Choose a central vertex $v$ (ties can be broken arbitrarily) and add it to $D_1$.
\item If $d_1$ is even, then choose all the vertices in $N_j(v)$, for $j \leq \displaystyle \frac{d_1}{2}$ and add them to $D_1$.\\ 
else choose all the vertices in $N_j(v)$, for $j \leq \displaystyle \frac{(d_1-1)}{2}$ and add them to $D_1$.
\item Put $i = \left\lfloor \displaystyle \frac{d_1}{2} \right\rfloor$.
	\item If $diam(\left\langle D_1\right\rangle) = \left\lceil \displaystyle \frac{diam(G)}{l}\right\rceil$, then Goto step 7, else Goto next step (step 5).
	\item Put $i=i+1$.
\item Choose a vertex from $N_i(v)$ and add it to $D_1$. Then GOTO step 4.
\item If $e_{\left\langle D_1 \right\rangle}(v) < e_G(v)$, for every vertex $v \in D_1$, then print $D_1$, \\ else suitably remove some vertices from $D_1$ such that $\left\langle D_1 \right\rangle$ maintains the conditions in steps 4 and 7.
\item Stop.	
\end{enumerate} 
\begin{note}
\label{note3}
It is to be noted that at the end of the algorithm, $ diam(\left\langle D_1\right\rangle) =   \left\lceil \displaystyle \frac{diam(G)}{l}\right\rceil = d_1$.  

\end{note} 
\subsection{Illustrations}
\textbf{Example 4:} Consider the graph $G$ as in Figure~\ref{fig3}.  First let us fix $l = 2$. $\Rightarrow d_1 =  \left\lceil \displaystyle \frac{diam(G)}{l}\right\rceil =  \left\lceil \displaystyle \frac{9}{2}\right\rceil = 5 $. 

There are four central vertices in $G$, namely, $v_1, v_6, v_{11}$ and $v_{16}$. We can start the algorithm from any vertex. Let us start from  $v_1$ and add it to $D_1$. 

Here, $d_1$ is odd. So, $j \leq \displaystyle \frac{(d_1-1)}{2} = 2$. So, we take all the vertices from $N_1(v_1)$ and $N_2(v_1)$ and add them to $D_1$. At this stage, $D_1 = \{v_1, v_2, v_{21}, v_{20}, v_3, v_{22}, v_{19}\}$. 

Now,  we can see that $diam(\left\langle D_1\right\rangle) = 4 < 5= \left\lceil \displaystyle \frac{diam(G)}{l}\right\rceil$. So, in order to make $diam(\left\langle D_1\right\rangle) = 5$, we add one more vertex $v_4$ from $N_3(v_1)$ and add it to $D_1$. 

At this stage,  $D_1 = \{v_1, v_2, v_{21}, v_{20}, v_3, v_{22}, v_{19}, v_4\}$ and \\ $diam(\left\langle D_1\right\rangle) = 5= \left\lceil \displaystyle \frac{diam (G)}{l}\right\rceil$. Also, (from the Figure~\ref{fig3}),  we can see that $D_1$ satisfies $e_{\left\langle D_1 \right\rangle}(v) < e_G(v)$, for every vertex $v \in D_1$. 

Also, every vertex in $V-D_1$ is reachable from $D_1$ by a distance lesser than or equal to five, that is, $dist(D_1, V-D_1) \leq 5$ $\Rightarrow  k= 5$ and hence $D_1$ satisfies  the final condition $k = diam (\left\langle D_1\right\rangle)$. 

The output is $D_1 = \{v_1, v_2, v_{21}, v_{20}, v_3, v_{22}, v_{19}, v_4\}$, which is the $2$-highly reduced dispersive, $5$ - distance dominating set and hence  the $2$-HC team is as shown in the Figure~\ref{fig3}.
\begin{figure}
\footnotesize\centering
\centerline{\includegraphics[width=5in]{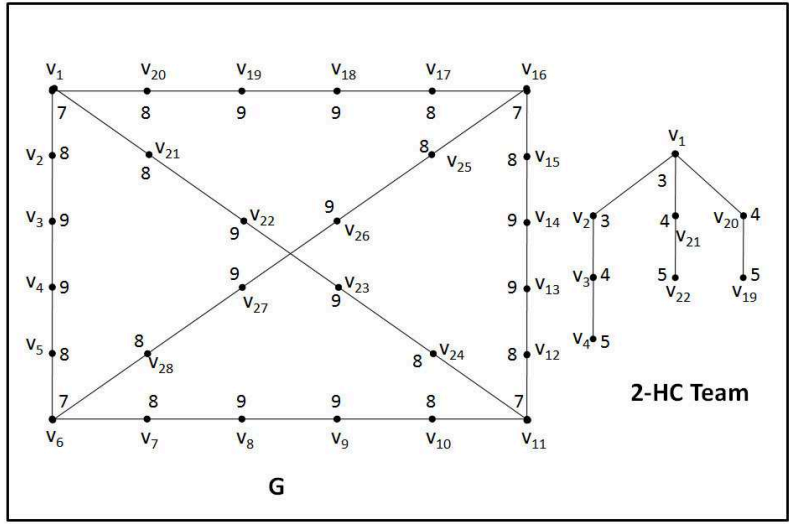}}
\caption{A Network and its 2-HC Team}
\label{fig3}
\end{figure}
\\\textbf{Example 5:} Consider the graph $G$ as in the Figure~\ref{fig4}. Let us fix $l =\displaystyle \frac{3}{2}$. $\Rightarrow d_1 =  \left\lceil \displaystyle \frac{diam(G)}{l}\right\rceil =  \left\lceil \displaystyle \frac{6}{1.5}\right\rceil = 4 $. $\Rightarrow d_1$ is even and hence $j \leq \displaystyle \frac{d_1}{2} = 2$. So, in this case, we arbitrarily start from $v_{11}$ and take all the vertices from $N_1(v_{11})$ and $N_2(v_{11})$ and add them to $D_1$. $D_1 = \{v_{11},v_{12}, v_1, v_{20}, v_{13}, v_2, v_{10}, v_{19}\}$. This set $D_1$ satisfies all the conditions of $\displaystyle \frac{3}{2}$- highly reduced dispersive, 4-distance dominating set and hence the $\displaystyle \frac{3}{2}$-HC team is as shown in the Figure~\ref{fig4}. 
\begin{figure}
\footnotesize\centering
\centerline{\includegraphics[width=4in]{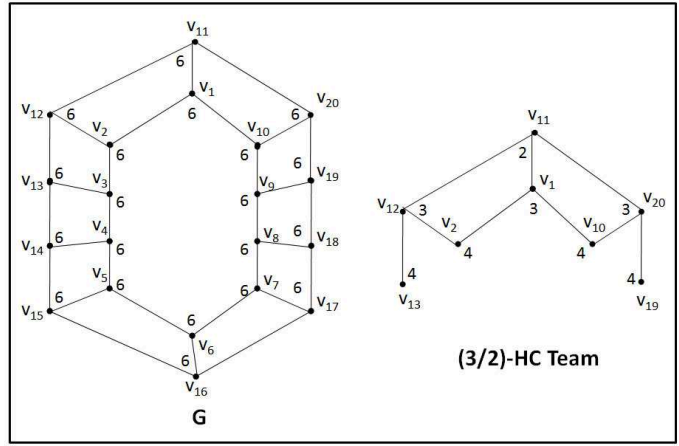}}
\caption{A Network and its $\displaystyle \frac{3}{2}$-HC Team}
\label{fig4}
\end{figure}

\subsection{Time Complexity of the Algorithm HICOM}
Let us discuss the time complexity of the algorithm as follows:
 
The definition of $l$-HC team is dependent on eccentricity of every vertex. So, we have to find eccentricity of every vertex of $G$. By Performing Breadth-First Search (BFS) method from each vertex, one can determine the distance from each vertex to every other vertex. The worst case time complexity of BFS method for one vertex is $O(n^{2})$. As the BFS is method is done for each vertex of $G$, the resulting algorithm has worst case time complexity $O(n^{3})$. As eccentricity of a vertex $v$ is defines as $e(v) = \max\{d(u,v): u\in V(G)\}$, finding eccentricity of vertices of $G$ takes at most $O(n^{3})$.

Thus, the total worst case time complexity of the algorithm is at most $O(n^{3})$.
\section{Correctness of the Algorithm HICOM}
\label{correct}
In order to prove that our algorithm yields a $l$-HC team, it is enough  to prove that $k^{*} \leq diam(\left\langle D \right\rangle)$. As $D_1$ is the output of the algorithm HICOM and $dist(D_1,V-D_1) \leq k$, we have to prove that $k \leq diam(\left\langle D_1 \right\rangle)$. \\ Let 
\begin{equation*}
x = 
\begin{cases}
\displaystyle \frac{d_1}{2}, & \text{if $d_1$ is even}\\ 
\displaystyle \frac{(d_1-1)}{2}, & \text{otherwise.}
\end{cases}
\end{equation*}
From  Note~\ref{note3}, $d_1 = \left\lceil \displaystyle \frac{diam(G)}{l}\right\rceil = diam(\left\langle D_1 \right\rangle)$.
So, $x$ can be written as follows:
\begin{equation*}
x = 
\begin{cases}
\displaystyle \frac{diam(\left\langle D_1 \right\rangle)}{2}, & \text{if $diam(\left\langle D_1 \right\rangle)$ is even}\\ 
\displaystyle \frac{[diam(\left\langle D_1 \right\rangle)-1]}{2}, & \text{otherwise.}
\end{cases}
\end{equation*}
The following theorem helps in proving the correctness of the algorithm HICOM.
\begin{theorem}
$k \leq r(G) -x$.
\end{theorem}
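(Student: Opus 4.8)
The plan is to exploit the fact that \textbf{HICOM} seeds $D_1$ with an entire ball around a central vertex, and then to reach every outside vertex along a geodesic that passes through that ball. Let $v$ be the central vertex picked in Step~1, so that $e_G(v)=r(G)$. First I would record the structural fact that in both branches of the definition of $x$ we have $x=\left\lfloor \frac{d_1}{2}\right\rfloor$, and that Step~2 adds to $D_1$ exactly the vertices of $N_j(v)$ for all $j\le x$; that is, $\{\,u\in V : d(v,u)\le x\,\}\subseteq D_1$ immediately after Step~2. Steps~5--6 only enlarge $D_1$, so this inclusion persists for the output set (Step~7 is discussed below).

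Next I would fix an arbitrary $u\in V-D_1$ and bound its distance to $D_1$. Since $v\in D_1$ we have $d(v,u)\le e_G(v)=r(G)$; and since $\{\,w : d(v,w)\le x\,\}\subseteq D_1$ while $u\notin D_1$, necessarily $d(v,u)\ge x+1$. Take a shortest $v$--$u$ path $v=w_0,w_1,\ldots,w_t=u$ with $t=d(v,u)$. The vertex $w_x$ lies at distance exactly $x$ from $v$, hence $w_x\in D_1$, and because the path is a geodesic, $d(u,w_x)=t-x\le r(G)-x$. Thus every vertex of $V-D_1$ is within distance $r(G)-x$ of $D_1$, i.e. $D_1$ is an $(r(G)-x)$-distance dominating set, so by the convention of Notation~2 the covering distance satisfies $k\le r(G)-x$.

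The step I expect to be the real obstacle is Step~7 of \textbf{HICOM}: the ``suitable removal'' of vertices to re-establish the less-dispersive condition must not delete any vertex of the core ball $\{\,w : d(v,w)\le x\,\}$, or the geodesic argument collapses. I would dispose of this by arguing that such removals can always be confined to the outer shells added in Steps~5--6 (together with pendant-type vertices): deleting a vertex strictly inside the core ball can only shrink the relevant $\left\langle D_1\right\rangle$-distances and so is never needed to fix ``less dispersive,'' while the requirement $diam(\left\langle D_1\right\rangle)=\left\lceil \frac{diam(G)}{l}\right\rceil$ from Note~\ref{note3} is already supplied by those same outer vertices. Hence the core ball survives in the final $D_1$ and the inclusion used above holds. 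With that settled the bound is immediate, and one then combines it with $r(G)-x\le d_1=diam(\left\langle D_1\right\rangle)$ to obtain the correctness conclusion $k\le diam(\left\langle D_1\right\rangle)$.
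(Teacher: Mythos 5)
Your proof is correct and rests on the same structural fact as the paper's: after Step~2 the entire ball $\{u : d_G(v,u)\le x\}$ around the central vertex $v$ lies in $D_1$, with $x=\left\lfloor d_1/2\right\rfloor$ in both parity branches, and $e_G(v)=r(G)$. Where you differ is in execution. The paper argues by an informal case analysis ($d_1$ even versus odd, $\left\langle D_1\right\rangle$ a tree versus containing cycles), essentially asserting that $k=r(G)-x$ in the tree case and that inserting further vertices in Step~6 can only decrease $k$. Your geodesic argument --- pick $u\in V-D_1$, walk a shortest $v$--$u$ path, and observe that the vertex $w_x$ at distance $x$ along it already lies in $D_1$, so $d(u,D_1)\le d(v,u)-x\le r(G)-x$ --- handles all cases uniformly and is the more rigorous of the two; it also makes explicit that later insertions are harmless because they only enlarge $D_1$. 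You additionally flag the genuine soft spot that the paper passes over in silence, namely that the ``suitable removal'' in Step~7 must not breach the core ball. One small correction to your handling of that point: deleting a vertex from an induced subgraph can only \emph{increase} (never shrink) distances between the remaining vertices, so your sentence claiming deletion ``can only shrink the relevant $\left\langle D_1\right\rangle$-distances'' has the inequality backwards; the right justification is that removing a core vertex cannot help restore $e_{\left\langle D_1\right\rangle}(w)<e_G(w)$ and so the removals may be confined to the outer shells. That slip does not affect the main bound, and your conclusion $k\le r(G)-x$ stands.
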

\begin{proof}
From the algorithm, \\
\textbf{Case(1):} Suppose $d_1$ is even, at the end of step 2 (after step 2 is executed and before executing step 3).
\begin{itemize}
	\item At the end of step 2, if  $\left\langle D_1 \right\rangle$ is a tree, then $k = r(G) - x$.
\item At the end of step 2, if  $\left\langle D_1 \right\rangle$  contains cycles and $diam(\left\langle D_1 \right\rangle) \neq \left\lceil \displaystyle \frac{diam(G)}{l}\right\rceil$, then in order to make $diam(\left\langle D_1 \right\rangle) = \left\lceil \displaystyle \frac{diam(G)}{l}\right\rceil$, we insert some more vertices from the next neighborhoods of $v$ in to $D_1$ in step 6 of the algorithm. Hence $k$ may become lesser than $r(G) -x$. 
\end{itemize}
Thus, if $d_1$ is even at the end of step 2, then $k \leq r(G) - x$.\\
\textbf{Case(2):} Suppose $d_1$ is odd at the end of step 2.
\begin{itemize}
\item In step 2 of the algorithm, we take $j \leq \displaystyle \frac{(d_1-1)}{2}$. That is, we take vertices from all the $j$ neighborhoods of $v$, where $j \leq \displaystyle \frac{(d_1-1)}{2}$.
\item But after executing step 2 (at the end of step 2), we get 
\begin{equation*}
diam(\left\langle D_1 \right\rangle)  
\begin{cases}
=\left\lceil \displaystyle \frac{diam(G)}{l}\right\rceil-1, & \text{if $\left\langle D_1 \right\rangle$ is a tree}\\
\leq \left\lceil \displaystyle \frac{diam(G)}{l}\right\rceil-1, &\text{otherwise}. 
\end{cases}
\end{equation*}
\item So, as discussed in point 2 of Case(1), in order to make $diam(\left\langle D_1 \right\rangle) = \left\lceil \displaystyle \frac{diam(G)}{l}\right\rceil$, we insert some more vertices from the next neighborhoods of $v$ in to $D_1$ in step 6 of the algorithm. Hence $k$ becomes lesser than or equal to $r(G)-x$.
\end{itemize}
Thus, if $d_1$ is odd at the end of step 2, then $k \leq r(G)-x$.\\

 Thus, $k \leq r(G) - x$.
\end{proof} 
  
By the definition of $x$ and above theorem, we can write
\begin{equation*}
k \leq 
\begin{cases}
r(G) - \displaystyle \frac{diam(\left\langle D_1 \right\rangle)}{2}, & \text{if $diam(\left\langle D_1 \right\rangle)$ is even}\\ 
r(G) - \displaystyle \frac{diam(\left\langle D_1 \right\rangle)}{2}+\displaystyle \frac{1}{2}, & \text{otherwise.}
\end{cases}
\end{equation*}
Thus, 
\begin{equation}
\label{eq1}
k \leq r(G) - \displaystyle \frac{diam(\left\langle D_1\right\rangle)}{2} + \displaystyle \frac{1}{2}. 
\end{equation}

$l$- min HC team does not exist in any given network for  all values of $l$. So, we analyze the value of $l$ for which our algorithm finds a $l$-HC team in any given network, with the condition that $k \leq diam(\left\langle D_1 \right\rangle) = \left\lceil \displaystyle \frac{diam(G)}{l}\right\rceil$ (since in the algorithm $diam(\left\langle D_1 \right\rangle)$ is fixed as  $\left\lceil \displaystyle \frac{diam(G)}{l}\right\rceil$).\\

\subsection{Minimum Value of $l$ for non self centered networks}
First, we find out the minimum value of $l$ for a network, which is not self centered, such that $l$-HC team exists in the network.
In this subsection throughout, we do not consider the self centered graphs.
\begin{theorem}
If the given network $G$ is not self centered, then our algorithm always finds a $l$-HC team, for $l \leq \displaystyle \frac{3}{2}$.
\end{theorem}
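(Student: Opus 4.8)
The plan is to verify that the set $D_1$ returned by HICOM satisfies all four requirements of Definition~\ref{all} whenever $l\le \tfrac32$ and $G$ is not self centered. Three of the four conditions come essentially for free from how the algorithm and the preceding results are set up: step~7 enforces $e_{\langle D_1\rangle}(v)<e_G(v)$ for every $v\in D_1$ (the less dispersive condition); Note~\ref{note3} gives $diam(\langle D_1\rangle)=d_1=\left\lceil diam(G)/l\right\rceil$, so condition~2 holds with equality; and $dist(D_1,V-D_1)\le k$ holds by the definition of $k$, so $D_1$ is a $k$-distance dominating set (take $k^{*}=k$). Hence the only thing that actually needs proving is the \emph{highly} reduced condition $k\le diam(\langle D_1\rangle)$.

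For this I would start from inequality~\eqref{eq1}, $k\le r(G)-\tfrac{diam(\langle D_1\rangle)}{2}+\tfrac12$, which was obtained from the previous theorem $k\le r(G)-x$ together with the definition of $x$. It then suffices to show
\[
r(G)-\frac{diam(\langle D_1\rangle)}{2}+\frac12 \;\le\; diam(\langle D_1\rangle),
\qquad\text{equivalently}\qquad
diam(\langle D_1\rangle)\;\ge\;\frac{2r(G)+1}{3}.
\]
Now $diam(\langle D_1\rangle)=d_1=\left\lceil diam(G)/l\right\rceil\ge diam(G)/l\ge \tfrac23\,diam(G)$, since $l\le\tfrac32$. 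Because $G$ is not self centered and both radius and diameter are integers, $diam(G)\ge r(G)+1$, so $2\,diam(G)\ge 2r(G)+2\ge 2r(G)+1$; dividing by $3$ gives $\tfrac23\,diam(G)\ge \tfrac{2r(G)+1}{3}$, hence $diam(\langle D_1\rangle)\ge \tfrac{2r(G)+1}{3}$, which is precisely what was required. Feeding this back into~\eqref{eq1} yields $k\le diam(\langle D_1\rangle)$, so $\langle D_1\rangle$ is an $l$-HC team for every $l\le\tfrac32$.

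Two small points I would also check to keep the argument airtight. First, the construction must be non-degenerate: when $G$ is not self centered, $diam(G)\ge 2$, so $diam(G)/l\ge 4/3$ and $d_1\ge 2$, which makes steps~2--6 meaningful and the case analysis for $x$ legitimate. Second — and this is the step I expect to be the real obstacle, the one I would write out in full — one must confirm that the clean-up in step~7 is harmless: removing vertices to restore the less dispersive property should preserve $diam(\langle D_1\rangle)=\left\lceil diam(G)/l\right\rceil$ (condition~2) and must not raise $dist(D_1,V-D_1)$ above $diam(\langle D_1\rangle)$, so that the bound $k\le r(G)-x$ underlying~\eqref{eq1} still holds for the \emph{final} $D_1$. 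Everything past that is the short arithmetic chain above, together with the standard inequalities $r(G)\le diam(G)\le 2r(G)$ and the non-self-centered hypothesis $diam(G)\ge r(G)+1$.
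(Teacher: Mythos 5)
Your proof is correct and reaches the paper's conclusion by a cleaner, more unified route. The paper also starts from inequality~\eqref{eq1}, but then writes $r(G)=diam(G)-a$, splits into $l<\frac{3}{2}$ and $l=\frac{3}{2}$, and within $l=\frac{3}{2}$ further splits into the bi-eccentric case and $a\ge 2$; moreover, in the $l<\frac{3}{2}$ branch it bounds $diam(\langle D_1\rangle)$ from \emph{above} by $\frac{diam(G)}{l}+1$ at exactly the point where a \emph{lower} bound on $diam(\langle D_1\rangle)$ is what is needed to conclude $k\le diam(\langle D_1\rangle)$, so that branch of the paper's argument is logically shaky as written. You instead reduce everything to the single inequality $diam(\langle D_1\rangle)\ge\frac{2r(G)+1}{3}$, use the correct direction $\left\lceil diam(G)/l\right\rceil\ge diam(G)/l\ge\frac{2}{3}\,diam(G)$, and close with the non-self-centered hypothesis $diam(G)\ge r(G)+1$, which disposes of every case and every $l\le\frac{3}{2}$ in one stroke. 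What your approach buys is uniformity and a repair of the paper's directional slip; what the paper's case analysis buys is a little extra information (the strict inequality $k<diam(\langle D_1\rangle)$ at $l=\frac{3}{2}$), which is not needed for the theorem. Your closing caveat about step~7 of HICOM --- that the vertex removals must preserve $diam(\langle D_1\rangle)=\left\lceil diam(G)/l\right\rceil$ and must not increase $dist(D_1,V-D_1)$, so that \eqref{eq1} still applies to the final output --- points at a gap the paper's proof silently passes over; it concerns the algorithm's specification rather than the arithmetic of this theorem, but it is worth recording.
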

\begin{proof}
Let $diam(\left\langle D_1 \right\rangle) = \left\lceil \displaystyle \frac{diam(G)}{l}\right\rceil$.\\
Let us write $r(G) = diam(G) - a$, where $1 \leq a \leq \displaystyle \frac{diam(G)}{2}$. As $G$ is not self centered, $a \neq 0$.\\
 $ diam(\left\langle D_1 \right\rangle) = \left\lceil \displaystyle \frac{diam(G)}{l}\right\rceil \geq \displaystyle \frac{diam(G)}{l} \Rightarrow [-diam(\left\langle D_1 \right\rangle)] \leq \displaystyle \frac{-diam(G)}{l}$.

Then by equation~\ref{eq1}, 

\begin{equation*}
\begin{split} 
k & \leq r(G) -\displaystyle \frac{diam(\left\langle D_1\right\rangle)}{2} +\displaystyle \frac{1}{2} \\
 & \leq [diam(G)-a]-\displaystyle \frac{diam(G)}{2l}+\displaystyle \frac{1}{2}. 
 \end{split}
\end{equation*}
By condition, $k\leq diam(\left\langle D_1 \right\rangle) = \left\lceil \displaystyle \frac{diam(G)}{l}\right\rceil \leq (\displaystyle \frac{diam(G)}{l})+1$ (since $\left\lceil y\right\rceil  \leq y + 1$).\\
$\Rightarrow [diam(G)-a]-\displaystyle \frac{diam(G)}{2l}+\displaystyle \frac{1}{2} \leq \displaystyle \frac{diam(G)}{l}+1$.\\
$\Rightarrow (2l-3)diam(G) \leq l(2a+1)$.\\
If $(2l-3) < 0$, then $[-diam(G)] \leq \displaystyle \frac{l(2a+1)}{(2l-3)}$.\\ 
$\Rightarrow diam(G) \geq \displaystyle \frac{-l(2a+1)}{ (2l-3)}$. \\
Let us write $ y = \displaystyle \frac{l(2a+1)}{(2l-3)}$. As $a\geq 1$, $\displaystyle \frac{l(2a+1)}{(2l-3)} \geq 1 \Rightarrow y \geq 1$.\\
Thus, $diam(G) \geq -y$, where $y$ is a positive quantity ($y \geq 1$).\\
This implies that if $(2l-3) < 0$, then $k\leq diam(\left\langle D_1 \right\rangle)$ is true for any graph $G$ with  $diam(G) \geq -y$, where $y \geq 1$ and hence it is true for any  
graph $G$ with  $diam(G) \geq 1$ (since  $diam(G)$ is always positive).\\
This implies that $k\leq diam(\left\langle D_1 \right\rangle)$ is true for any given network (not self centered) if $(2l-3) < 0$, that is, if $l < \displaystyle \frac{3}{2}$.\\
Thus, our algorithm finds a $l$-HC team in a given network (not self centered), if $l < \displaystyle \frac{3}{2}$.\\
Next let us consider  $l =\displaystyle \frac{3}{2}$.\\
$\Rightarrow diam(\left\langle D_1 \right\rangle) = \left\lceil \displaystyle \frac{2}{3} diam(G)\right\rceil$.\\
\textbf{Case(1):} Suppose $G$ is bi-eccentric. \\
$\Rightarrow r(G) =diam(G)-1$.\\ 
Then by equation~\ref{eq1}, 
\begin{equation*}
\begin{split}
k & \leq diam(G) - 1 - \displaystyle \frac{diam(G)}{3} +\displaystyle \frac{1}{2} \\
& = \displaystyle \frac{2}{3} diam(G) - \displaystyle \frac{1}{2} \\
& < \displaystyle \frac{2}{3} diam(G)\\
& = diam(\left\langle D_1 \right\rangle). 
 \end{split}
 \end{equation*}
$\Rightarrow k <  diam(\left\langle D_1 \right\rangle)$.\\
Thus, if $G$ is bi-eccentric, then our algorithm  finds a $\displaystyle \frac{3}{2}$-HC team.\\
For the  other graphs, let us write 
 $r(G) = diam(G) - a$, for $2 \leq a \leq \displaystyle \frac{diam(G)}{2}$.
Then by equation~\ref{eq1},
\begin{equation*}
\begin{split}
k & \leq diam(G) - a - \displaystyle \frac{diam(G)}{3} +\displaystyle \frac{1}{2} \\
& = \displaystyle \frac{2}{3} diam(G) - (a-\displaystyle \frac{1}{2}) \\
& < \displaystyle \frac{2}{3} diam(G),\  for \  a \geq 2.\\
& = diam(\left\langle D_1 \right\rangle).  
 \end{split}
 \end{equation*}
 $\Rightarrow k <  diam(\left\langle D_1 \right\rangle)$.\\
Thus our algorithm  finds a $\displaystyle \frac{3}{2}$-HC team in any given network (not self centered).\\ 
Thus, from the above discussions, it is clear that our algorithm  finds a $l$-HC team, for $l \leq \displaystyle \frac{3}{2}$ in any given network which is not self centered. 
\end{proof}
We have theoretically proved that $\displaystyle \frac{3}{2}$-HC team exists in any given network. But what is the greatest lower bound of $l$? Let us give an example for  $l=1.6$ such that 1.6-HC team does not exist in the network.\\
Let $G$ be a bi-eccentric network $\Rightarrow r(G) = diam(G)-1$. Let $diam(G) =50$.\\
 $\Rightarrow diam(\left\langle D_1 \right\rangle) = \left\lceil \displaystyle \frac{100}{1.6}\right\rceil = 32$.\\
 $\Rightarrow x = (32/2) = 16$.\\
 $\Rightarrow k \leq r(G) - x = 49 - 16 = 33 > diam(\left\langle D_1 \right\rangle)$.\\
Thus, 1.6-HC team does not exist in any given network. It may exist in some networks and may not exist in some networks. This shows that the greatest lower bound for $l$ is $\displaystyle \frac{3}{2}$, such that $l$-HC team exists in any given network.

\subsection{Minimum Value of $l$ for Self centered Networks}
For networks which are not self centered, the minimum value of $l$ is $\displaystyle \frac{3}{2}$. So, for self centered networks also, we start from $\displaystyle \frac{3}{2}$.\\
Suppose $G$ is self centered. $\Rightarrow r(G) =diam(G)$.\\
Let $l =\displaystyle \frac{3}{2}$.\\
$\Rightarrow diam(\left\langle D_1 \right\rangle) = \left\lceil \displaystyle \frac{2}{3} diam(G)\right\rceil$.\\
Then by equation~\ref{eq1}, 
\begin{equation*}
\label{eq:3}
\begin{split}
k & \leq diam(G) - \displaystyle \frac{diam(G)}{3} + \displaystyle \frac{1}{2} \\
& = \displaystyle \frac{2}{3} diam(G) + \displaystyle \frac{1}{2} \\
& > \displaystyle \frac{2}{3} diam(G) \\
& = diam(\left\langle D_1 \right\rangle).
\end{split}
\end{equation*}
$\Rightarrow k >  diam(\left\langle D_1 \right\rangle)$.\\
Theoretically it seems that   our algorithm does not find a $\displaystyle \frac{3}{2}$-HC team in a self centered network.

We tried  direct substitutions and found that $k \leq diam(\left\langle D_1\right\rangle)$ for  $l=\displaystyle \frac{3}{2}$ and hence our algorithm finds a $\displaystyle \frac{3}{2}$-HC team in self centered networks also.   Let us recall
\begin{equation*}
x = 
\begin{cases}
\displaystyle \frac{diam(\left\langle D_1 \right\rangle)}{2}, & \text{if $diam(\left\langle D_1 \right\rangle)$ is even}\\ 
\displaystyle \frac{[diam(\left\langle D_1 \right\rangle)-1]}{2}, & \text{otherwise.}
\end{cases}
\end{equation*}
Some examples are shown in the table ~\ref{tab:1}.\\
\begin{table}[h]
\caption{Direct Substitution for Self centered Graphs with $l=\displaystyle \frac{3}{2}$}
\label{tab:1}      
\begin{tabular}{llll}
\hline\noalign{\smallskip}
 $diam(G)$ &	 $diam(\left\langle D_1\right\rangle) = \left\lceil \displaystyle \frac{diam(G)}{1.5}\right\rceil$ & $x$ 		& $k = r(G) -x$ \\
\noalign{\smallskip}\hline\noalign{\smallskip} 
500& 334& 167& $333 < diam(\left\langle D_1\right\rangle)$\\
100 & 67& 33& $67 = diam(\left\langle D_1\right\rangle)$\\
99& 66 & 33 & $66 = diam(\left\langle D_1\right\rangle)$\\
81& 54 & 27 & $54 = diam(\left\langle D_1\right\rangle)$\\
50 & 34 & 17 & $33 < diam(\left\langle D_1\right\rangle)$\\
34& 23 & 11 & $23 = diam(\left\langle D_1\right\rangle)$\\
23& 16 & 8 & $15 < diam(\left\langle D_1\right\rangle)$\\
20& 14& 7& $13 < diam(\left\langle D_1\right\rangle)$\\

\noalign{\smallskip}\hline
\end{tabular}
\end{table}
\\
From the above two subsections, we infer that our algorithm finds a $l$-HC team in any given network, for $l \leq \displaystyle \frac{3}{2}$.
\begin{note}
The answer from direct substitution method and the answer from theoretical method will not differ much. For example, theoretically we proved that $k >  diam(\left\langle D_1 \right\rangle)$ for $l=1.6$ in self centered graphs. Now we try by direct substitution method. \\
Let $G$ be a self centered network and let $diam(G) =300$.\\
 $\Rightarrow diam(\left\langle D_1 \right\rangle) = \left\lceil \displaystyle \frac{300}{1.6}\right\rceil = 188$.\\
 $\Rightarrow x = (188/2) = 94$.\\
 $\Rightarrow k \leq r(G) - x = 300 - 94 = 206 > diam(\left\langle D_1 \right\rangle)$.\\
 Thus, by direct substitution also, we get the same result.\\ 
 So we have a question: Why do the answer from theoretical method and the answer from the direct substitution method differ for $l=\displaystyle \frac{3}{2}=1.5$ in  self centered graphs?
 
For $l=\displaystyle \frac{3}{2}$ in self centered graphs,  we obtained $k \leq \displaystyle \frac{2}{3} diam(G) + \displaystyle \frac{1}{2}$ theoretically (refer equation in the starting of this sub section). This quantity differs from $\displaystyle \frac{2}{3} diam(G)$ only by 0.5. If we take $\left\lfloor k\right\rfloor$, we get $ k = \displaystyle \frac{2}{3}diam(G)$. So, if the difference between $k$ and $diam(\left\langle D_1 \right\rangle)$ is less than 1, then we have to go for direct substitution method.
 \end{note}
 \begin{note}
 Suppose $G$ is a 2-self centered network and let $l=\displaystyle \frac{3}{2}$.\\
   $\Rightarrow diam(\left\langle D_1 \right\rangle) = \left\lceil \displaystyle \frac{2}{3}\ diam(G)\right\rceil = 2 =diam(G)$\\
   But, by definition, we need to find a network $\left\langle D \right\rangle$ whose diameter is not same as the diameter of the original network $G$.\\
   Thus, in 2-self centered networks, we should not take $l=\displaystyle \frac{3}{2}$. \\
But the only possibility for $l$ in 2-self centered networks is $l=2$, because if $l=2$, then $diam(\left\langle D_1 \right\rangle) = 1 < diam(G)$.
This implies that the team is a clique. So, dominating clique (if exists) forms a HC team in 2-self centered networks. If there is no dominating clique in $G$, then $G$ does not possess a $l$-HC team. 

As the given graph itself is having less diameter, there is no need for $l$-HC team in 2-self centered graphs or in any graphs whose diameter is 2.  Comfortability is mainly important  for large diameter graphs to reduce the dispersiveness in those graphs.  
 \end{note}
 \begin{note}
  $l$-min HC team exists in any given network, only if $l \leq\displaystyle \frac{3}{2}$. If $l > \displaystyle \frac{3}{2}$, then $l$-min HC team may or may not exist, because $l$-min HC team does not exist in any given network for all values of $l$. For example, in any cycle $C_n$, 2-min HC team does not exist, whereas $\displaystyle \frac{3}{2}$-min HC team always exists.
  
   Our algorithm finds $l$-HC team in any given network for $l \leq \displaystyle \frac{3}{2}$ and for $l > \displaystyle \frac{3}{2}$, our algorithm may or may not find a $l$-HC team. Anyhow we need to find a highly comfortable team. Our algorithm does it for a reasonable $l$. But the main problem lies in how much it is comfortable or in other way, how much dispersiveness is reduced. 
\end{note}

Next, we fix $l=2$ and analyze the conditions for which our algorithm finds a 2-HC team and the conditions for which it does not find a 2-HC team. 
\subsection{2-Highly Comfortable Team}
 As $l =2$,
\begin{equation}
\label{eq2}
diam(\left\langle D_1 \right\rangle) = \left\lceil \displaystyle \frac{diam(G)}{2} \right\rceil.
\end{equation} 
\begin{equation*}
\Rightarrow diam(\left\langle D_1 \right\rangle) = 
\begin{cases}
\displaystyle \frac{diam(G)}{2}, & \text{if $diam(G)$ is even}\\ 
\displaystyle \frac{(diam(G)+1)}{2}, & \text{otherwise.}
\end{cases}
\end{equation*}
First we discuss for even diameter of $G$.\\
\textbf{Case(1):} Suppose $diam(G)$ is even.\\
$\Rightarrow diam(\left\langle D_1 \right\rangle) = \displaystyle \frac{diam(G)}{2} \  \Rightarrow x = \displaystyle \frac{diam(G)}{4}$.\\
\textbf{Sub case(1):} Suppose $diam(G)=2r(G)$.\\
$\Rightarrow r(G) = \displaystyle \frac{diam(G)}{2}$.\\
$\Rightarrow k \leq \displaystyle \frac{diam(G)}{2}-(\displaystyle \frac{diam(G)}{4} -\displaystyle \frac{1}{2}) $, by equation~\ref{eq1}.\\
$\Rightarrow k \leq \displaystyle \frac{diam(G)}{4} + \displaystyle \frac{1}{2}$.\\
But $(\displaystyle \frac{diam(G)}{4} + \displaystyle \frac{1}{2}) \leq \displaystyle \frac{diam(G)}{2}$   if $diam(G) \geq 2$.\\
Thus, $k\leq diam(\left\langle D_1 \right\rangle)$ for all graphs of diameter greater than or equal to 2, if $r(G) = \displaystyle \frac{diam(G)}{2}$.\\
\textbf{Sub case(2):} Suppose  $diam(G)=2r(G) -2$.\\
$\Rightarrow r(G) = \displaystyle \frac{(diam(G)+2)}{2}$.\\
$\Rightarrow k \leq \displaystyle \frac{(diam(G)+2)}{2}-(\displaystyle \frac{diam(G)}{4} -\displaystyle \frac{1}{2}) = \displaystyle \frac{diam(G)}{4} + \displaystyle \frac{3}{2} $, by equation~\ref{eq1}.\\
But $\displaystyle \frac{diam(G)}{4} + \displaystyle \frac{3}{2} \leq \displaystyle \frac{diam(G)}{2}$   if $diam(G) \geq 6$.\\
Thus, $k\leq diam(\left\langle D_1 \right\rangle)$ for all graphs of diameter greater than or equal to 6, if $r(G) = \displaystyle \frac{(diam(G)+2)}{2}$.\\
 Let us generalize the above two sub cases as follows:\\
 Suppose  $diam(G)=2r(G) - b$, where $b$ is even.\\
$\Rightarrow r(G) = \displaystyle \frac{(diam(G)+b)}{2}$.\\
$\Rightarrow k \leq \displaystyle \frac{(diam(G)+b)}{2}- (\displaystyle \frac{diam(G)}{4} -\displaystyle \frac{1}{2}) = \displaystyle \frac{diam(G)}{4} + \displaystyle \frac{(b+1)}{2} $, by equation~\ref{eq1}.\\
But $[\displaystyle \frac{diam(G)}{4} + \displaystyle \frac{b}{2}+\displaystyle \frac{1}{2}] \leq \displaystyle \frac{diam(G)}{2}$   if $diam(G) \geq 2b+2$.\\
Thus, $k\leq diam(\left\langle D_1 \right\rangle)$ for all graphs of diameter greater than or equal to $(2b+2)$, if $r(G) = \displaystyle \frac{(diam(G)+b)}{2}$, where $b$ is even.\\
\textbf{Case(2):} Suppose $diam(G)$ is odd.\\
$\Rightarrow diam(\left\langle D_1 \right\rangle) = \displaystyle \frac{(diam(G)+1)}{2}$.\\
\textbf{Sub case(1):} Suppose  $diam(G)=2r(G) -1$.\\
$\Rightarrow r(G) = \displaystyle \frac{(diam(G)+1)}{2}$.\\
$\Rightarrow k \leq \displaystyle \frac{(diam(G)+1)}{2}-\displaystyle \frac{(diam(G)+1)}{4} +\displaystyle \frac{1}{2}= \displaystyle \frac{diam(G)}{4} + \displaystyle \frac{3}{4}$, by equation~\ref{eq1}.\\
But $\displaystyle \frac{diam(G)}{4} + \displaystyle \frac{3}{4} \leq \displaystyle \frac{(diam(G)+1)}{2}$   if $diam(G) \geq 1$.\\
Thus, $k\leq diam(\left\langle D_1 \right\rangle)$ for all graphs of diameter greater than or equal to 1, if $r(G) = \displaystyle \frac{(diam(G)+1)}{2}$.\\
In general,  
Suppose  $diam(G)=2r(G) - b$, where $b$ is odd.\\
$\Rightarrow r(G) = \displaystyle \frac{(diam(G)+b)}{2}$.\\
$\Rightarrow k \leq \displaystyle \frac{(diam(G)+b)}{2}- \displaystyle \frac{(diam(G)+1)}{4} +\displaystyle \frac{1}{2} = \displaystyle \frac{diam(G)}{4} + \displaystyle \frac{b}{2}+ \displaystyle \frac{1}{4} $, by equation~\ref{eq1}.\\
But $(\displaystyle \frac{diam(G)}{4} + \displaystyle \frac{b}{2}+ \displaystyle \frac{1}{4}) \leq \displaystyle \frac{(diam(G)+1)}{2}$   if $diam(G) \geq 2b-1$.\\
Thus, $k\leq diam(\left\langle D_1 \right\rangle)$ for all graphs of diameter greater than or equal to $(2b-1)$, if $r(G) = \displaystyle \frac{(diam(G)+b)}{2}$, where $b$ is odd.\\

Next, let us find the nature of $b$.\\
\textbf{Nature of $b$:}\\
Usually $0 \leq b \leq r(G)$. If $b = 0$, then $diam(G) = 2r(G)$ and if $b= r(G)$, then $G$ is self centered.\\
\textbf{Case(1):} Suppose $G$ is self centered.\\
 $\Rightarrow r(G) = diam(G)$.\\
\begin{equation*}
\begin{split}
 k & \leq diam(G) - (\displaystyle \frac{diam(G)}{4} -\displaystyle \frac{1}{2}) \\
 & =  \displaystyle \frac{3}{4} diam(G) + \displaystyle \frac{1}{2} \\
 & > \displaystyle \frac{diam(G)}{2} \\
 & = diam(\left\langle D_1 \right\rangle).
 \end{split}
 \end{equation*}
Thus, if $l = 2$, then our algorithm yields  $k > diam(\left\langle D_1 \right\rangle)$ for self centered networks. This implies that if $G$ is self centered, then our algorithm does not find a  2-highly comfortable team.\\
\textbf{Case(2):} Suppose $G$ is bi-eccentric.\\
$\Rightarrow r(G) = diam(G) -1 $.\\
\begin{equation*}
\begin{split}
 k & \leq [diam(G)-1] - \displaystyle \frac{diam(G)}{4} +\displaystyle \frac{1}{2} \\
 & = \displaystyle \frac{3}{4} diam(G) -\displaystyle \frac{1}{2} \\
 & > \displaystyle \frac{diam(G)}{2}, \ for \ diam(G) \geq 3\\
 & = diam(\left\langle D_1 \right\rangle).
 \end{split}
 \end{equation*}
Thus, if $l = 2$, then our algorithm yields  $k > diam(\left\langle D_1 \right\rangle)$, for $diam(G) \geq 3$. This implies that if $G$ is bi-eccentric, then our algorithm does not find a  2-highly comfortable team for $diam(G) \geq 3$. 

Proceeding like this, we infer that if \textbf{$b$ is a constant independent of  $r(G)$}, then our algorithm  finds a 2-highly comfortable team. If $b$ is dependent on $r(G)$, then our algorithm does not yield a 2-Highly comfortable team. 

Thus, 2-HC team exists for infinitely many cases. 2-HC team is important because dispersiveness is reduced $50\%$ only if $l=2$. 

Similarly, we can see that $l$-HC team exists for infinitely many cases and does not exist for some cases,  if $l \geq 3$. If the value of $l$ is increased, then dispersiveness is reduced much. But there are some disadvantages in increasing the value of $l$. We discuss in the next section.

\section{Reduction Rate of Dispersiveness}
\label{rate}
\begin{itemize}
	\item Our algorithm finds a $l$-HC team in any given network, for $l \leq \displaystyle \frac{3}{2}$.
	\item Although $l$ can be any value lesser than or equal to $\displaystyle \frac{3}{2}$,  
	we should always take the greatest value of $l$, because  if $l$ tends to 1, then by definition, $diam(D)$ tends to $diam(G)$, which again represents a dispersive team. 
	\item If $l$ is less, then dispersiveness in the team is more and hence the comfortability inside the team is less. 
	\item So, let us always take the maximum possible value of $l$. That is, we take 
	
\begin{enumerate}
	\item $l =2$ (if 2-HC team exists), 
	\item If 2-HC team does not exist, then we take $ \displaystyle \frac{3}{2} \leq l < 2$ (whichever $l$ is possible).
	\item As $\displaystyle \frac{3}{2}$ is the lower bound for $l$, we can take $l = \displaystyle \frac{3}{2}$ for any given network. 
	\end{enumerate} 
	\item This implies that our algorithm finds a HC team $\left\langle D \right\rangle$ from a network, whose diameter $diam(\left\langle D \right\rangle)$ is less than or equal to either $50 \%$ of $diam(G)$  ($l = 2$) or at least $66 \%$ of $diam(G)$ ($l =\displaystyle \frac{3}{2}$).
 \item This implies that our algorithm finds a HC team from a given network  whose dispersiveness is reduced at least 34$\%$ ($l \geq 1.5$) in all cases and whose dispersiveness is reduced to half ($l = 2$) in some cases (but infinitely many cases).
\end{itemize}
From the above points, we infer that our algorithm has reasonably reduced the dispersiveness of a team.

The above points pose the following question:
Why $l$ is not chosen above 2?
In fact, if $l$ increases, then dispersiveness is reduced much. For example, if $l= 3$, then dispersiveness is reduced $70\%$, if $l= 4$, then dispersiveness is reduced $75\%$, and so on.
But there are some disadvantages in increasing $l$. If $l$ is increased, then
\begin{enumerate}
	\item the number of members in the team is reduced ($|D|$ is reduced).
	\item the number of cases (graphs) for which $l$-HC team exists, also get reduced (that is, probability of existence of $l$-HC team is reduced).
	\item $k$ may become greater than $diam(\left\langle D_1 \right\rangle)$.
\end{enumerate}
Until now, we are discussing about only minimizing $D$ and hence $|D|$ can be reduced. So, $l$ can be increased. But in the section~\ref{max}, we consider the problem of maximizing $D$. In such a case, $l$ should not be increased. 

Also, as the algorithm is only an approximation algorithm, we can suitably take any value of $l$, such that  $k \leq diam(\left\langle D_1 \right\rangle)$ is not affected and 
which is giving better result. 
\section{Performance Ratio Of the Algorithm}
\label{ratio}
It is to be noted that the algorithm has three parameters, namely, $|D|$, $k^{*}$ and $l$. The set $D$ represents the team members of $l$-HC team, $k^{*}$ represents the dispersion index for the team and $l$ determines the existence of a HC team with $D$ and $k^{*}$. It is necessary that $D$, $k^{*}$ and $l$ should be minimized simultaneously. Also, $D$, $k^{*}$ and $l$ are inter related. 

In this section,  we give performance ratio for finding both $D$ and $k^{*}$, keeping $l$ fixed. 
Also, we prove some theorems and corollary, which give the  relations between the three parameters.

The following theorem gives the performance ratio for finding the $l$-highly reduced dispersive set $D$.

\begin{theorem}
\label{per}
The performance ratio of the algorithm for finding $l$-HC team  is at most $O(\ln \Delta(G))$.
\end{theorem}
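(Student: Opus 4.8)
The plan is to reduce the analysis of Algorithm HICOM to a known approximation result for (connected) dominating sets, and then lift the guarantee through the graph power $G^{k^{*}}$. Recall from the proof of Theorem~\ref{better NPC} that a $l$-highly reduced dispersive $k^{*}$-distance dominating set of $G$ is exactly a connected dominating set of the power graph $G^{k^{*}}$; the extra eccentricity constraints $e_{\left\langle D\right\rangle}(v) < e_G(v)$ and the diameter bound $diam(\left\langle D\right\rangle)\le\lceil diam(G)/l\rceil$ are enforced separately in steps~2--7 of the algorithm by the ball-growing construction around a central vertex, and do not interfere with the counting argument. So the first step is: fix $l$ (hence $k^{*}$, since $k^{*}$ is determined as in Notation~2), and observe that minimizing $|D|$ over $l$-highly reduced dispersive $k^{*}$-distance dominating sets is, up to the structural side-conditions, the connected domination problem on $G^{k^{*}}$.

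Next I would invoke the classical greedy analysis for (connected) dominating set: the greedy/spanning-tree-growing heuristic for connected domination achieves a performance ratio of $H(\Delta) = O(\ln\Delta)$, where $H$ is the harmonic number and $\Delta$ is the maximum degree of the graph on which domination is performed — here $G^{k^{*}}$. The key step is therefore to bound $\Delta(G^{k^{*}})$ in terms of $\Delta(G)$: a vertex in $G^{k^{*}}$ is adjacent to everything within distance $k^{*}$ in $G$, so $\Delta(G^{k^{*}}) \le \Delta(G) + \Delta(G)(\Delta(G)-1) + \cdots + \Delta(G)(\Delta(G)-1)^{k^{*}-1}$, which is polynomial in $\Delta(G)$ of degree $k^{*}$. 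Since $\ln\bigl(\Delta(G)^{k^{*}}\bigr) = k^{*}\ln\Delta(G)$ and $k^{*}$ is a fixed constant (bounded in terms of $diam(G)/l$, and in the regime $l\ge 3/2$ one may further argue $k^{*}$ is absorbed), we get $H(\Delta(G^{k^{*}})) = O(\ln\Delta(G))$. Combining: $|D_1|/|D| \le H(\Delta(G^{k^{*}})) = O(\ln\Delta(G))$, which is the claimed ratio.

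The last step is to verify that the ball-growing step of HICOM (steps~1--6) together with the greedy completion does not blow the ratio past that of plain connected domination. Here I would argue that the vertices forced in by steps~2--6 — all of $\bigcup_{j\le x} N_j(v)$ plus at most a handful of vertices from the next few neighborhoods to raise $diam(\left\langle D_1\right\rangle)$ to $d_1$ — can only be charged against a constant factor, or alternatively that the minimum set $D$ must itself contain a connected subnetwork of comparable diameter, so the two are within an $O(\ln\Delta)$ factor. The main obstacle I anticipate is precisely this: reconciling the rigid "grab an entire ball around a central vertex" construction with the flexibility a truly minimum $l$-HC set enjoys, and showing the pruning in step~7 (removing vertices to restore the eccentricity condition) cannot be exploited to make $|D_1|$ arbitrarily larger than $|D|$. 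Once that is pinned down, the $O(\ln\Delta(G))$ bound follows from the harmonic-sum estimate on $G^{k^{*}}$.
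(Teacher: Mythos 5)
Your argument is essentially the paper's own proof: reduce the problem to connected domination on the power graph $G^{k^{*}}$, invoke the known $O(\ln\Delta)$ performance ratio for connected dominating sets, bound $\Delta(G^{k^{*}})\le(\Delta(G))^{k^{*}}$, and absorb the factor $k^{*}$ as a constant to conclude $O(\ln\Delta(G))$. The concern you raise in your final paragraph, that the ball-growing and pruning steps of HICOM must be shown not to inflate the ratio, is a genuine issue, but it is one the paper's proof also leaves unaddressed, so your proposal matches the published argument.
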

\begin{proof}
Let $D$ be a minimum $l$-highly reduced dispersive $k^{*}-$ distance dominating set of $G$.\\
$\Rightarrow D$  is a connected $k^{*}-$ distance dominating set of $G$ (since any $l$-highly reduced dispersive set is a connected set).\\
$\Rightarrow D$  is a connected dominating set of $G^{k^{*}}$ (by definition of the graph $G^{k^{*}}$). \\
Performance ratio for finding $\gamma_c(G)$ is at most $O(\ln \Delta(G))$.\\
$\Rightarrow$ Performance ratio for finding $\gamma_c(G^{k^{*}})$ is at most $O(\ln \Delta(G^{k^{*}}))$.\\
But, $\Delta(G^{k^{*}}) \leq (\Delta(G)) ^ {k^{*}}$.\\
$\Rightarrow$ Performance ratio for finding $\gamma_c(G^{k^{*}})$ is at most $O(\ln (\Delta(G)) ^ {k^{*}}) = O(k^{*} \ln \Delta(G)) = O(\ln \Delta(G))$.\\
$\Rightarrow$ Performance ratio for finding the set $D$ is at most $O(\ln \Delta(G))$.\\
Thus, the performance ratio of the algorithm for finding $l$-HC team  is at most $O(\ln \Delta(G))$.
\end{proof}

Next, let us give the performance ratio for finding $k^{*}$. Before that, we prove a theorem relating $D$, $k^{*}$ and $l$, which helps us in finding the performance ratio of $k^{*}$.
\subsection{Properties of $k^{*}$}
\begin{theorem}
\label{big}
 $diam(G) \leq \displaystyle \frac{l(2k^{*} +1)}{(l-1)}$ for any network $G$ containing a $l$-HC team.
\end{theorem}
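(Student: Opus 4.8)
The plan is to start from the defining inequalities of an $l$-HC team and chain them together. Suppose $\left\langle D \right\rangle$ is an $l$-HC team of $G$ with dispersion index $k^{*}$. By Definition~\ref{all}, we have three usable facts: $\mathrm{dist}(D, V-D) \leq k^{*}$ ($k^{*}$-distance domination), $k^{*} \leq diam(\left\langle D\right\rangle)$ (easy accessibility), and $diam(\left\langle D\right\rangle) \leq \left\lceil \frac{diam(G)}{l} \right\rceil$ (reduced dispersion within $D$). First I would bound $diam(G)$ geometrically: any two vertices $u, w \in V(G)$ are each within distance $k^{*}$ of $D$, and any two vertices of $D$ are within distance $diam(\left\langle D\right\rangle)$ of each other inside $\left\langle D\right\rangle$ (hence within that distance in $G$). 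Concatenating a shortest path from $u$ into $D$, a path across $D$, and a shortest path out of $D$ to $w$ gives $diam(G) \leq k^{*} + diam(\left\langle D\right\rangle) + k^{*} = 2k^{*} + diam(\left\langle D\right\rangle)$.

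Next I would use $k^{*} \leq diam(\left\langle D\right\rangle)$ together with the upper bound $diam(\left\langle D\right\rangle) \leq \left\lceil \frac{diam(G)}{l}\right\rceil \leq \frac{diam(G)}{l} + 1$ (using $\lceil y \rceil \leq y+1$, as the paper does in the proof of the minimum-$l$ theorem). Substituting into $diam(G) \leq 2k^{*} + diam(\left\langle D\right\rangle)$:
\begin{equation*}
diam(G) \leq 2k^{*} + \frac{diam(G)}{l} + 1.
\end{equation*}
Rearranging, $\left(1 - \frac{1}{l}\right) diam(G) \leq 2k^{*} + 1$, i.e. $\frac{l-1}{l} \, diam(G) \leq 2k^{*} + 1$, which yields $diam(G) \leq \frac{l(2k^{*}+1)}{l-1}$, as required. (Here $l > 1$, which is the relevant range since the paper always takes $l \geq \frac{3}{2}$; for $l = 1$ the statement is vacuous as the right side is infinite.)

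The main obstacle is making the triangle-inequality-style concatenation in the first paragraph fully rigorous: one must check that the "exit path" from $D$ to $w$ can be chosen so that its starting vertex in $D$ is reachable within $\langle D\rangle$ from the entry vertex used for $u$ — but this is exactly guaranteed because $\langle D\rangle$ is connected (every $l$-highly reduced dispersive set is connected, as noted after Definition~\ref{all} and used in Theorem~\ref{better NPC}) and its internal diameter is at most $diam(\left\langle D\right\rangle)$. A secondary subtlety is the edge case where $u$ or $w$ already lies in $D$, in which case the corresponding $k^{*}$ term is simply $0$ and the bound only improves, so the inequality still holds. Once these points are dispatched the computation is routine.
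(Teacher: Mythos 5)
Your proof is correct, and it reaches the theorem by a genuinely different and in fact cleaner route than the paper. Both arguments ultimately pass through the same intermediate inequality $diam(G) \leq diam(\left\langle D\right\rangle) + 2k^{*}$ and then perform identical algebra with $\left\lceil y\right\rceil \leq y+1$; the difference is entirely in how that inequality is obtained. You derive it directly by the triangle inequality: pick entry and exit vertices $a,b \in D$ with $d(u,a) \leq k^{*}$ and $d(w,b) \leq k^{*}$ (using the $k^{*}$-distance domination in its intended sense, that every vertex of $V-D$ is within distance $k^{*}$ of some vertex of $D$), and bound $d(a,b)$ in $G$ by $d_{\left\langle D\right\rangle}(a,b) \leq diam(\left\langle D\right\rangle)$ since distances can only shrink when passing from the connected induced subgraph to $G$. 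The paper instead argues extremally and constructively: it fixes $\left\langle D\right\rangle$, attaches pendant paths of length at most $k^{*}$ to its vertices, and argues through cases ($k^{*}=1$, $k^{*}=2$, then ``similarly'') and two Claims that the diameter of the resulting graph cannot exceed $diam(\left\langle D\right\rangle) + 2k^{*}$. Your approach buys rigor and generality -- it applies verbatim to an arbitrary $G$ containing an $l$-HC team rather than to graphs built by the paper's attachment procedure, and it avoids the induction-by-example step -- while the paper's construction buys some intuition about which graphs make the bound tight (pendant paths hanging off eccentric peripheral vertices of $D$). One small remark: you list $k^{*} \leq diam(\left\langle D\right\rangle)$ among your usable facts but never actually need it; the bound follows from distance domination and the diameter constraint alone.
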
 
\begin{proof}
Let $G$ be the given graph.\\
Let $D$ be a $l$-highly reduced dispersive set of $G$.\\
 By definition of $l$-HC team, $diam(\left\langle D\right\rangle) \leq \left\lceil \displaystyle \frac{diam(G)}{l} \right\rceil$ and \\ $k^{*} \leq diam(\left\langle D\right\rangle) \leq \left\lceil \displaystyle \frac{diam(G)}{l} \right\rceil$. \\
$\Rightarrow \left\langle D\right\rangle$ can be any sub graph of $G$ with $diam(\left\langle D\right\rangle) \leq \left\lceil \displaystyle \frac{diam(G)}{l} \right\rceil$.\\
Let us construct the graph $G$ assuming that $D$ and $k^{*}$ are given and  $\left\langle D\right\rangle$ is its $l$-HC team with  $k^{*}$ satisfying the above condition.\\
\textbf{Case(1):}\\
Assume that $k^{*} =1$.\\ 
First, let us construct $G$ assuming that $\left\langle D\right\rangle$ is a path, say $Q$, of length\\ $(\left\lceil \displaystyle \frac{diam(G)}{l} \right\rceil+1)$. 
$\Rightarrow diam(Q) = \left\lceil \displaystyle \frac{diam(G)}{l} \right\rceil.$
That is, let us construct $G$ assuming that the path $\left\langle Q\right\rangle$ is its $l$-HC team with  $k^{*} =1$.\\
As   $k^{*} =1$, it means that every vertex in the set $V-Q$ is reachable from the set $Q$ by a distance of one. So, we can add at least one vertex to each  vertex of the path $Q$. \\
\textbf{Sub case(1):}
Let us add at least one (pendant) vertex to \textbf{each of the vertices of the path $Q$}.\\ Now we can see that the newly constructed graph $G$ is nothing but a tree, whose diameter is increased by \textbf{two} from the diameter of the path $Q$. For example, consider the path $Q$ as in  Figure~\ref{fig5}. The set $V-Q$ represents the newly added vertices and the newly constructed $G$ = $ Q \cup (V-Q)$, which is a tree.\\ 
$\Rightarrow diam(constructed \  G) = diam(\left\langle Q\right\rangle) +2$.\\
 \begin{figure}
\footnotesize\centering
\centerline{\includegraphics[width=4in]{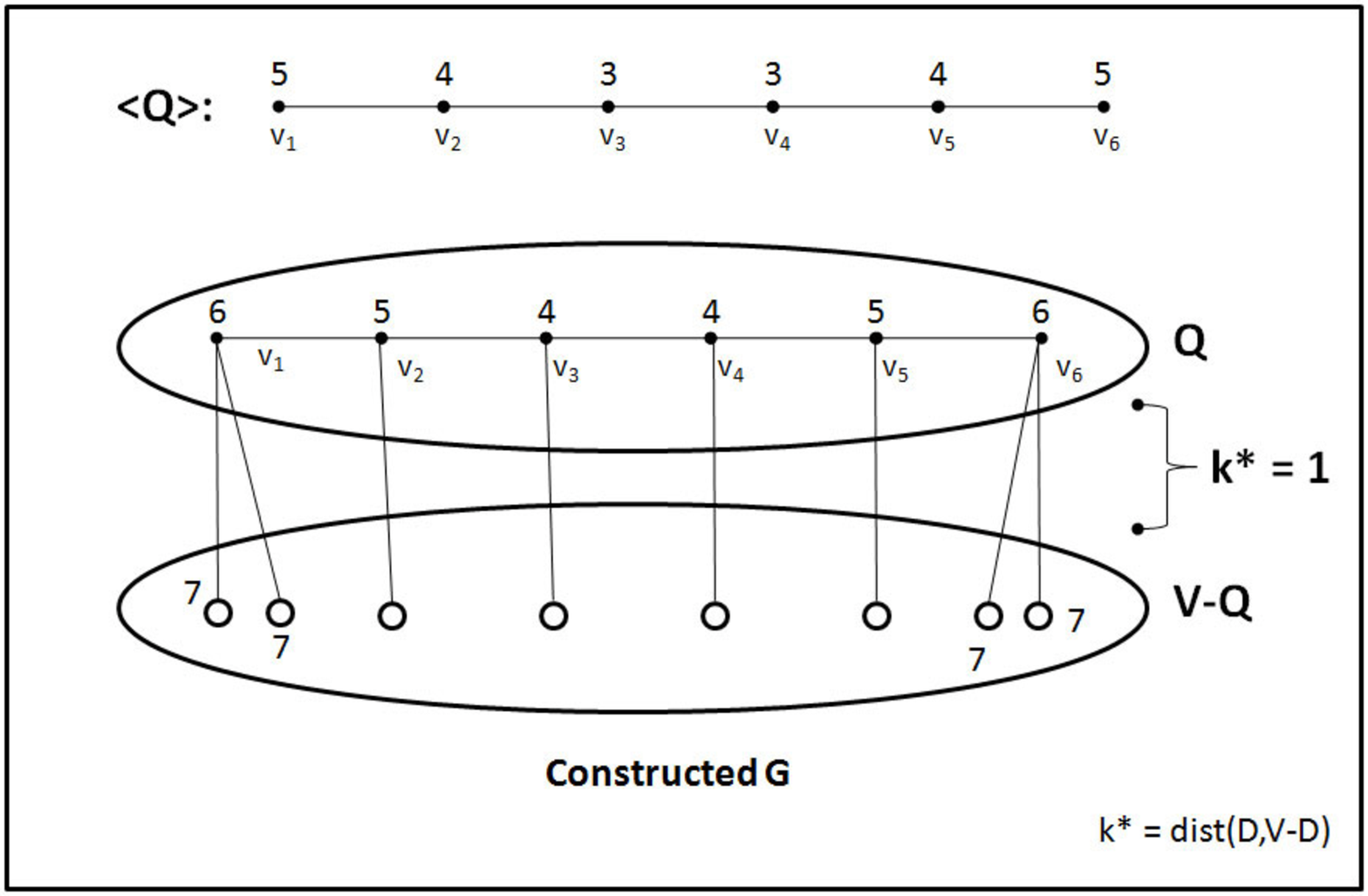}}
\caption{Construction of $G$ from a Path  with Diameter of the Path increased by Two}
\label{fig5}
\end{figure}
\textbf{Sub case(2):}
Let us add at least one vertex to each vertex of the path $Q$ \textbf{except exactly one peripheral vertex of  $Q$}.\\ Now we can see that the newly constructed graph $G$ is nothing but a tree, whose diameter is increased by \textbf{one} from the diameter of the path $Q$. For example, refer Figure~\ref{fig6}.\\
 $\Rightarrow diam(constructed \  G) = diam(\left\langle Q\right\rangle) +1$.
\begin{figure}
\footnotesize\centering
\centerline{\includegraphics[width=4in]{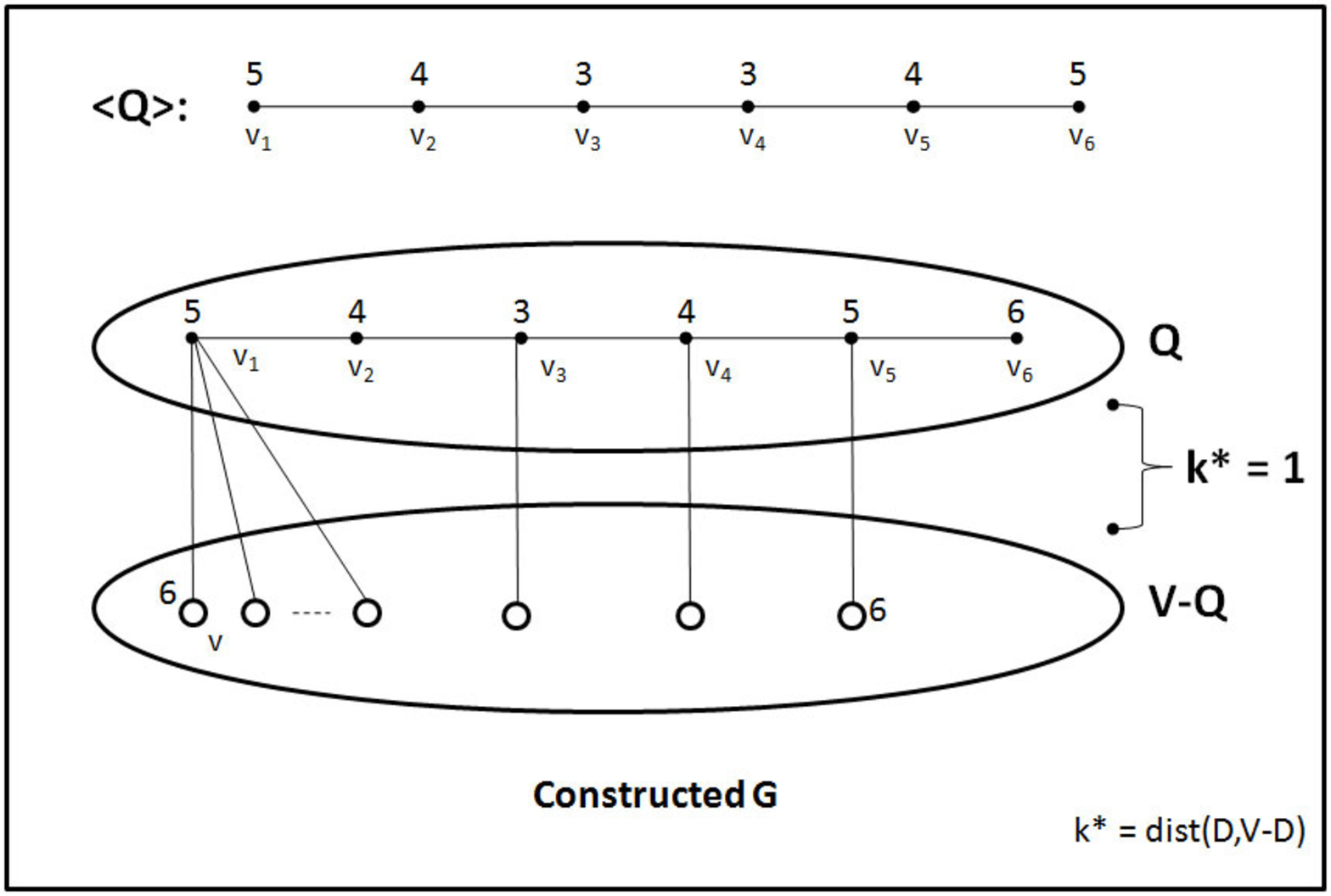}}
\caption{Construction of $G$ from a Path  with Diameter of the Path increased by One}
\label{fig6}
\end{figure}

Thus, from the Sub cases (1) and (2), we can infer that \\
$diam(constructed \  G) \leq diam(\left\langle Q\right\rangle) + 2$.\\
\textbf{Claim 1:} $diam(constructed \  G) \neq diam(Q) + c$, for $c \geq 3$.\\
That is, we claim that the diameter of the newly constructed $G$ \textbf{can not increase more than two} from the diameter of the path $Q$.\\
\textit{Proof of Claim 1:} As $k^{*} = 1$, every vertex in $V-Q$ should reach $Q$ at a distance exactly equal to 1 (not more than 1).\\
This implies that we \textbf{can not add a path of length more than 1} to each vertex of the path $Q$, that is we can not attach paths of length 2,3, etc. We can add only one vertex, which is nothing but a path of length 1.\\
This implies that the  diameter of new $G$ can not increase more than two from the diameter of the path $Q$. (because as discussed in the Sub cases (1) and (2), adding one vertex to a peripheral vertex of $Q$ will increase the diameter of $Q$ by one and adding one vertex to each of the peripheral vertices of $Q$ will increase the diameter of $Q$ by two).\\ Thus, $diam(constructed \  G) \neq diam(Q) + c$, for $c \geq 3$\\ and hence $diam(constructed \  G) \leq diam(\left\langle Q\right\rangle) + 2$.\\

Until now, we have proved that if $k^{*} = 1$ and $\left\langle D\right\rangle$ is a \textbf{path} of diameter $\left\lceil \displaystyle \frac{diam(G)}{l} \right\rceil$, then  
\begin{equation}
\label{eq:5}
diam(constructed \  G) \leq diam(\left\langle D \right\rangle) + 2.
\end{equation}

Next, let us prove that equation~\ref{eq:5} is true  for any sub graph $\left\langle D\right\rangle$ (not only paths).\\
Construction process of new $G$ is same as the above one, that is, in the worst case, we add at least one pendant vertex to each vertex of the sub graph $\left\langle D\right\rangle$.\\
$\Rightarrow diam(constructed \  G) = diam(\left\langle D \right\rangle) + 2$.\\ 
\textbf{Claim 2:} For any sub graph $\left\langle D \right\rangle$, $diam(constructed \  G) \neq diam(\left\langle D \right\rangle) + c$, for $c \geq 3$.\\
\textit{Proof for Claim 2:} It is to be noted that 
\begin{itemize}
	\item adding at least one (pendant) vertex to \textbf{one peripheral vertex} of \textbf{any sub graph} $\left\langle D \right\rangle$ will increase the diameter of $D$ by \textbf{one} and 
	\item adding at least one pendant vertex to \textbf{two peripheral vertices}, which are eccentric points of each other, will increase the diameter of $D$ by \textbf{two}. For example, refer Figure~\ref{fig7}.
\end{itemize}
\begin{figure}
\footnotesize\centering
\centerline{\includegraphics[width=4in]{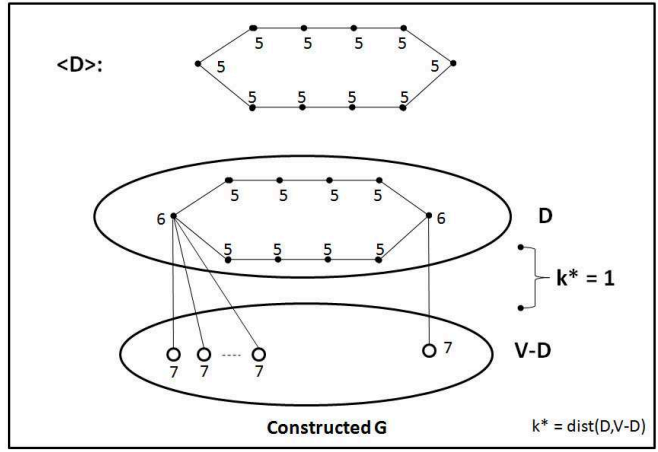}}
\caption{Construction of $G$ from General $D$ with Diameter of $D$ increased by Two}
\label{fig7}
\end{figure}

Also as discussed above, we can not add  a path of length more than one  to any vertex of $D$, because $k^{*} =1$.\\
This implies that diameter of the new $G$ can not be increased more than two from the diameter of $\left\langle D \right\rangle$.\\
Thus, for any sub graph $\left\langle D \right\rangle$, if $k^{*} = 1$, then \\ $diam(constructed \  G) \neq diam(\left\langle D \right\rangle) + c$, for $c \geq 3$.\\
Combining the above two claims, we get 
\begin{equation}
\label{eq:6}
diam(G) \leq diam(\left\langle D \right\rangle) + 2,\  if \ k^{*} = 1
\end{equation}
\textbf{Case(2):} Assume that $k^{*} = 2$.\\
This implies that every vertex in $V-D$ is reachable from $D$ by a distance of two. So, we can attach at least one  \textbf{path of length two} ($P_2$) to each vertex of the path $Q$ or any subgraph $\left\langle D \right\rangle$. For example, refer Figure~\ref{fig8}. As discussed in the Case (1),  we get $diam(constructed \  G) \leq diam(\left\langle D \right\rangle) + 4$. Also, as $k^{*} =2$, we \textbf{can not attach a path of length more than 2} to each vertex of $D$ and hence $diam(constructed \  G) \neq diam(\left\langle D \right\rangle) + c$, for $c \geq 5$.
\begin{figure}
\footnotesize\centering
\centerline{\includegraphics[width=4in]{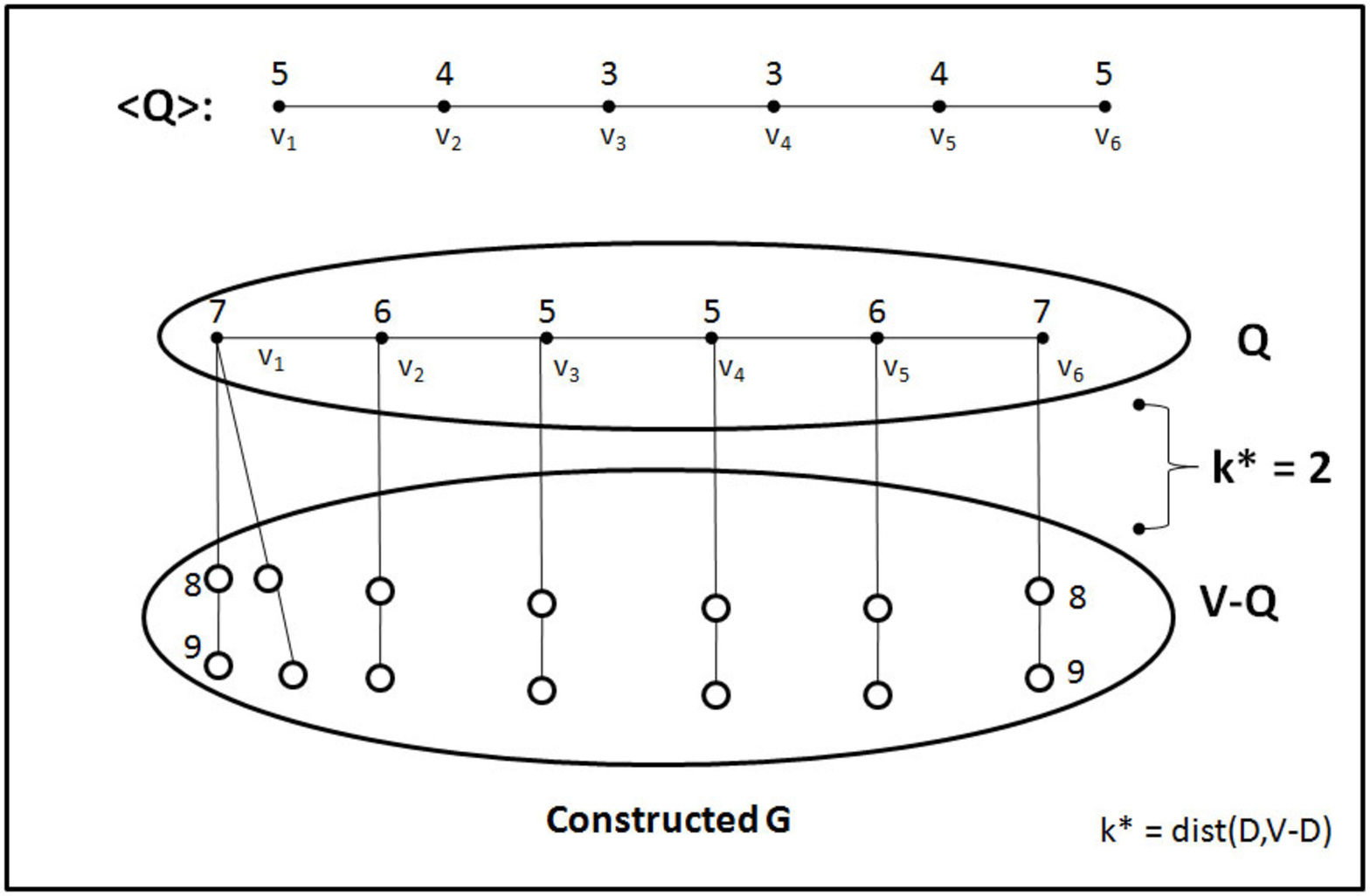}}
\caption{Construction of $G$ from a Path  with Diameter of Path increased by Four}
\label{fig8}
\end{figure}
Thus, 
\begin{equation}
\label{eq:7}
diam(G) \leq diam(\left\langle D \right\rangle) + 4,\  if \ k^{*} = 2
\end{equation}

Similar to the  discussions in the cases (1) and (2), we can construct a new $G$ for any $k^{*}$. Similar to the equations~\ref{eq:6} and \ref{eq:7}, we  get \\
$diam(G) \leq diam(\left\langle D \right\rangle) + 2k^{*}$.\\
 $\Rightarrow diam(G) \leq 2k^{*} + \left\lceil \displaystyle \frac{diam(G)}{l}\right\rceil$.\\
$\Rightarrow diam(G) \leq 2k^{*} +(\displaystyle \frac{diam(G)}{l}) + 1$ (since $\left\lceil x \right\rceil \leq x+1$).\\
$\Rightarrow diam(G) \leq \displaystyle \frac{l(2k^{*} +1)}{(l-1)}$.
\end{proof}
Next, we give the performance ratio of finding $k^{*}$ using the Theorem~\ref{big}.
\begin{theorem}
\label{PR}
The performance ratio for finding the dispersion index $k^{*}$ of a $l$-HC team is constant and $k^{*} \rightarrow \displaystyle \frac{2}{(l-1)}$, as $diam(G) \rightarrow \infty$.
\end{theorem}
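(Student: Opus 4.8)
Here $k^{*}$ denotes the dispersion index of a minimum $l$-HC team of $G$ and $k=dist(D_1,V-D_1)$ that of the set $D_1$ returned by HICOM; since $D_1$ is itself an $l$-HC team and $k^{*}$ is minimal over all such teams we have $k\geq k^{*}$, so $k/k^{*}$ is the natural performance ratio for the parameter $k^{*}$. The plan is to bound this ratio above by an explicit function of $diam(G)$ and then send $diam(G)\to\infty$.

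Both ingredients are already available. From the correctness analysis of Section~\ref{correct} together with Note~\ref{note3}, $k\leq diam(\langle D_1\rangle)=\lceil diam(G)/l\rceil\leq \frac{diam(G)+l}{l}$ (using $\lceil x\rceil\leq x+1$). From Theorem~\ref{big} applied to a minimum $l$-HC team, $diam(G)\leq \frac{l(2k^{*}+1)}{l-1}$, which rearranges to $k^{*}\geq \frac{diam(G)(l-1)-l}{2l}$; this lower bound is strictly positive once $diam(G)>\frac{l}{l-1}$, which is the only regime where comfortability is of interest, so $k/k^{*}$ is well defined there. Combining the two estimates,
\[
\frac{k}{k^{*}}\;\leq\;\frac{(diam(G)+l)/l}{\bigl(diam(G)(l-1)-l\bigr)/(2l)}\;=\;\frac{2\bigl(diam(G)+l\bigr)}{diam(G)(l-1)-l}.
\]
The right-hand side depends only on $diam(G)$ and $l$, and since its numerator and denominator are affine in $diam(G)$ with leading coefficients $2$ and $l-1$, it is decreasing in $diam(G)$ and converges to $\frac{2}{l-1}$ as $diam(G)\to\infty$. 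Hence, for each fixed $l$, the performance ratio is eventually bounded by any constant exceeding $\frac{2}{l-1}$ and has limiting value exactly $\frac{2}{l-1}$, which is the assertion of the theorem.

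The algebra above is routine; the step I expect to be the real obstacle is confirming that $\frac{2}{l-1}$ is attained in the limit and is not merely an upper bound. This requires exhibiting, for arbitrarily large target diameters, instances on which HICOM is forced to output $D_1$ with $k$ close to $\lceil diam(G)/l\rceil$ while the optimum attains $k^{*}$ close to the Theorem~\ref{big} lower bound; the natural candidates are the pendant-path-extended trees from the proof of Theorem~\ref{big}, and the delicate point is checking that on them the algorithm cannot accidentally do much better than the worst case. If one only wishes to justify that the performance ratio is constant for each fixed $l$, this tightness analysis is unnecessary and the displayed inequality alone suffices.
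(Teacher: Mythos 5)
Your proof is correct and follows essentially the same route as the paper: bound $k$ above by $\lceil diam(G)/l\rceil\leq diam(G)/l+1$, bound $k^{*}$ below via Theorem~\ref{big}, and take the quotient $\frac{2(diam(G)+l)}{(l-1)diam(G)-l}\to\frac{2}{l-1}$. The tightness issue you flag (whether $\frac{2}{l-1}$ is actually attained rather than merely an upper bound in the limit) is likewise left unaddressed by the paper, which only establishes the upper bound on the ratio.
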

\begin{proof}
From the Theorem~\ref{big}, we get, $k^{*} \geq [\displaystyle \frac{(l-1)}{l}diam(G)-1]/2$.\\
Also, by definition of $l$-HC team, $k \leq \left\lceil \displaystyle \frac{diam(G)}{l}\right\rceil \leq \displaystyle \frac{diam(G)}{l}+1$.\\
\begin{equation*}
\begin{split}
\Rightarrow \displaystyle\frac{k}{k^{*}} & \leq \displaystyle \frac{2(diam(G)+l)}{[(l-1)diam(G)-l]}\\
& \rightarrow \displaystyle \frac{2}{(l-1)},\ when\  diam(G) \rightarrow \infty.
\end{split}
\end{equation*}
\end{proof}
As we have $l=\displaystyle \frac{3}{2}$ as the lower bound, performance ration for finding $k^{*}$ in a $\displaystyle \frac{3}{2}$-HC team is 4. 
\begin{corollary}
\label{cor}
$l(k^{*}-1) \leq diam(G) \leq \displaystyle \frac{l(2k^{*} +1)}{(l-1)}$.
\end{corollary}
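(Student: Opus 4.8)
\textbf{Proof proposal for Corollary~\ref{cor}.}

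The statement is a conjunction of two bounds, so I would handle each separately. The right-hand inequality $diam(G) \le \frac{l(2k^{*}+1)}{(l-1)}$ is precisely the conclusion of Theorem~\ref{big}, which already holds for any network $G$ containing an $l$-HC team; hence nothing beyond invoking that theorem is needed for that half.

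For the left-hand inequality $l(k^{*}-1) \le diam(G)$, I would argue directly from the defining properties of an $l$-HC team, with no appeal to minimality. Let $D$ be the vertex set of an $l$-HC team of $G$ (such a set exists by hypothesis). By Definition~\ref{all}, condition~(4) gives $k^{*} \le diam(\left\langle D\right\rangle)$, and condition~(2) gives $diam(\left\langle D\right\rangle) \le \left\lceil \frac{diam(G)}{l}\right\rceil$. Chaining these two inequalities yields $k^{*} \le \left\lceil \frac{diam(G)}{l}\right\rceil$. Applying the elementary estimate $\lceil x\rceil \le x+1$ (used repeatedly elsewhere in the paper) with $x = \frac{diam(G)}{l}$ gives $k^{*} \le \frac{diam(G)}{l} + 1$, i.e.\ $k^{*}-1 \le \frac{diam(G)}{l}$; multiplying through by the positive number $l$ produces $l(k^{*}-1) \le diam(G)$. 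Combining this with the bound from Theorem~\ref{big} gives $l(k^{*}-1) \le diam(G) \le \frac{l(2k^{*}+1)}{(l-1)}$, as claimed.

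There is essentially no genuine obstacle here: the corollary is just a packaging of Theorem~\ref{big} together with the two definitional inequalities of an $l$-HC team. The only point worth a careful sentence is noting that the left-hand bound uses exactly (and only) the available hypothesis that $G$ contains an $l$-HC team, so the inequality is independent of whether $|D|$ or $k^{*}$ is minimum, and in fact it holds for every $l$-HC team of $G$, not merely for the $l$-min HC team.
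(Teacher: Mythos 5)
Your proof is correct and follows essentially the same route as the paper: the upper bound is quoted from Theorem~\ref{big}, and the lower bound comes from chaining $k^{*} \leq diam(\left\langle D\right\rangle) \leq \left\lceil diam(G)/l\right\rceil \leq diam(G)/l + 1$ and multiplying by $l$. If anything, your version states the middle step more cleanly than the paper, which has a small typo (it writes $k^{*} \leq \left\lceil diam(G)/l\right\rceil + 1$ where it means $k^{*} \leq diam(G)/l + 1$).
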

\begin{proof}
By definition of $l$-HC team, $k^{* } \leq diam(\left\langle D \right\rangle) \leq \left\lceil \displaystyle \frac{diam(G)}{l}\right\rceil$.\\
$\Rightarrow k^{* } \leq  \left\lceil \displaystyle \frac{diam(G)}{l}\right\rceil$.\\
$\Rightarrow k^{* } \leq  \left\lceil \displaystyle \frac{diam(G)}{l}\right\rceil + 1$ (since $\left\lceil x \right\rceil \leq x+1$).\\
$\Rightarrow l(k^{*}-1) \leq diam(G)$. Also, the upper bound follows from the Theorem ~\ref{big}.
\end{proof}
\begin{note}
It is to be noted that our algorithm can be applied to find $l$-HC team in \textbf{random networks} (graphs) also. From the theorems~\ref{per}, \ref{PR} and Corollary~\ref{cor}, it is clear that the performance ratio of the algorithm for finding $D$ and $k^{*}$ is dependent on  $diam(G)$ and $\Delta(G)$. As both these terms can be expressed in terms of the probability $p$, the performance ratio of the algorithm can be easily obtained for random networks in terms of $p$.

Also, if the network (graph) is disconnected, then as mentioned in the Section~\ref{intro}, algorithm can can be applied to each connected component of the network and hence \textbf{$l$-HC team can be obtained in disconnected networks also}. Thus, algorithm can be applied to find $l$-HC team in any given network. 
\end{note}
\section{HC team with maximum members}
\label{max}

In the above sections, we have always minimized $D$ . Note that $D$ can also be maximized. But $k^{*}$ should always be minimum irrespective of whether $D$ is minimized or maximized.

Let us state the problem as follows:
Maximize $D$ such that
\begin{enumerate}
\item $e_{\left\langle D\right\rangle}(v) <  e_G(v)$, for every vertex $v \in D$ (less dispersive)  
	\item $diam(\left\langle D\right\rangle) \leq \left\lceil \displaystyle \frac{diam(G)}{l}\right\rceil$ (measure of dispersion within $D$, that is, good communication among team members)  
	\item $dist(D, V-D) \leq k^{*}$ ($k^{*}-$ distance domination, that is, good service providers to the non team members)
	and \item $k^{*} \leq diam (\left\langle D\right\rangle)$ (easily accessible from the non team members).
\end{enumerate}
Maximal cardinality of a  minimum \lq $l$-highly reduced dispersive\rq \ $k^{*}-$ distance dominating set of $G$ is denoted by $\Gamma_{lhcomf}(G)$. Let us denote the HC team with maximum members as $l$-max HC team and HC team with minimum members as $l$- min HC team.\\
\textbf{Example 6:} Consider the graph $G$ as in Figure~\ref{fig9}.
\begin{figure}[h]
\footnotesize\centering
\centerline{\includegraphics[width=4in]{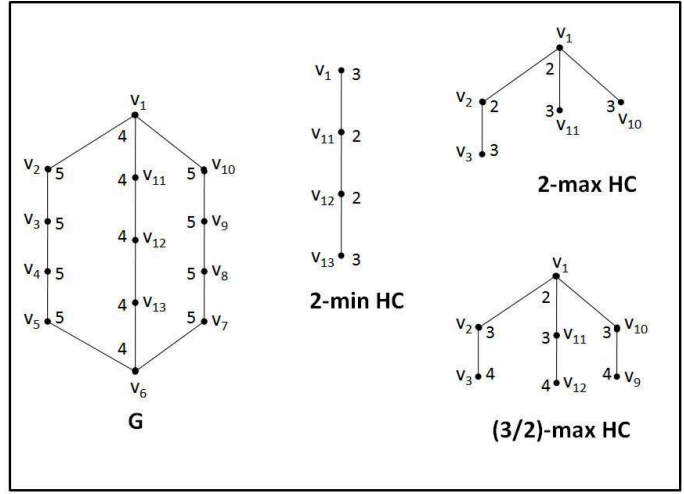}}
\caption{A Network, its min-HC Team and its max-HC Team}
\label{fig9}
\end{figure}
 First, let us take $l=2$. From the Figure~\ref{fig9}, we can see that $D_1 =\{v_1, v_{11}, v_{12}, v_{13}\}$ forms 2-highly reduced dispersive, 3-distance dominating set, because, $diam(\left\langle D_1\right\rangle) = 3= \left\lceil \displaystyle \frac{diam(G)}{2}\right\rceil$ and $k^{*} = 3 = diam(\left\langle D_1\right\rangle)$. 

Also, from the Figure~\ref{fig9}, we can see that $D_2 = \{v_1, v_2, v_3, v_{10}, v_{11}\}$ also forms 2-highly reduced dispersive, 3-distance dominating set, because, $diam(\left\langle D_1\right\rangle) = 3= \left\lceil \displaystyle \frac{diam(G)}{2}\right\rceil$ and $k^{*} = 3 = diam(\left\langle D_1\right\rangle)$.

Thus, for $l=2$ and $k^{*} = 3$, we get two 2-HC teams, one set with lesser number of vertices than the other set. This implies that $\left\langle D_1\right\rangle$ is the 2-min HC team and  hence $\gamma_{2hcomf}(G) = 4$. Also, $\left\langle D_2\right\rangle$ is the 2-max HC team  and $\Gamma_{2hcomf}(G) = 5$.

Next, let us fix $l =\displaystyle \frac{3}{2}$. We get the $\displaystyle \frac{3}{2}$-max HC team as in the Figure~\ref{fig9}.

\subsection{Advantages of $l$-max HC and $l$-min HC teams}
\textbf{Advantage of $l$-max HC team:} From definition, it is clear that $l$- max HC team contains more members than $l$-min HC team. So, if members are more, 
\begin{itemize}
	\item the time taken to complete a task will be fast  
	\item the average work done by a person in the team will be overall reduced.
	\item As the average work done by a person is less, stress for a person is less.
	\item As there are more members, more ideas will be shared  and so on.
\end{itemize}
 So, \textbf{more is the team power, fast is the work done and less is the stress for a person}.\\
\textbf{Advantage of $l$-min HC team:} As $l$-min HC team contains less members comparing to $l$-max HC team,  the  cost spent to maintain the  team is less, that is, maintenance cost is less. 

Our algorithm can be used to find both $l$-max HC team and $l$-min HC team. Let us discuss it as follows:
As discussed in the Section~\ref{rate},  \textbf{if $l$ is more, then people in the team are less}. So, in order to find $l$-min HC team, we can increase the value of $l$ and find $D_1$ from the algorithm such that it satisfies $k \leq diam(\left\langle D_1 \right\rangle)$. Similarly, in order to find $l$-max HC team, keep $l$  minimum and find $D_1$ from the algorithm such that it satisfies $k \leq diam(\left\langle D_1 \right\rangle)$. But if $l$ is reduced, the reduction rate of dispersiveness will be minimum. So, we must choose $l$ such that dispersiveness is reduced reasonably. As $l=\displaystyle \frac{3}{2}$ is the least value of $l$, which is giving good reduction in dispersiveness, we can always choose $l=\displaystyle \frac{3}{2}$ to find $l$-max HC team.
\section{Conclusion}
\label{conc}
In this paper, a new index called comfortability is defined in SNA. Based on this, three new definitions, namely comfortable team, better comfortable team and highly comfortable team  are given. It is proved that forming better comfortable team or highly comfortable  team in any given network are NP-complete. A polynomial time approximation algorithm is given for finding $l$-HC team in any given network and the time complexity of that algorithm is given. The various values of $l$ for which  $l$-HC team exists are analyzed and a lower bound for $l$ is obtained such that $l$-HC team exists.  From that analysis, it is proved that our algorithm has reasonably reduced the dispersion rate. Some of the structural properties are analyzed and using that performance ratio of the algorithm has been proved.  Highly comfortable team with maximum members, is defined. The advantages of minimum highly comfortable team and maximum highly comfortable team are discussed. It is also analyzed, how our algorithm finds both $l$-min HC team and $l$-max HC team for a reasonable value of $l$.
\subsection{Future Work}
The algorithm can be applied in a particular social network, for example, scale-free networks, and can be tried to reduce the performance ratio in that network. Algorithm can be applied to get exact values also in some particular networks. Further analysis can be made to find the performance ratio of the algorithm to find $l$-max HC team and to reduce the performance ratio for finding $l$-min HC team.



\end{document}